\theoremstyle{plain}
\newtheorem{theorem}{Theorem}
\newtheorem{remark}[theorem]{Remark}
\newtheorem{lemma}[theorem]{Lemma}
\newtheorem{proposition}[theorem]{Proposition}
\newtheorem{definition}[theorem]{Definition}
\newcommand{\ii}{\mathrm{i}}
\title{%Classical and quantum 
Classical and quantum controllability of a rotating asymmetric molecule}
\begin{document}
\author{Eugenio Pozzoli\footnote{Inria, Sorbonne Universit\'e, Universit\'e de Paris, CNRS, Laboratoire Jacques-Louis Lions, Paris, France (eugenio.pozzoli@inria.fr).}}
\maketitle
\begin{abstract}
We study both the classical and quantum rotational dynamics of an asymmetric top molecule, controlled through three orthogonal electric fields that interact with its dipole moment. The main difficulties in studying the controllability of these infinite-dimensional quantum systems are the presence of severe spectral degeneracies in the drift Hamiltonian and the nonsolvability of the stationary free Schr\"odinger equation, which lead us to apply a perturbative Lie algebraic approach.

In this paper we show that, while the classical equations given by the Hamiltonian system on ${\rm SO}(3)\times \mathbb{R}^3$ are controllable for all values of the rotational constants and all dipole configurations, the Sch\"odinger equation for the quantum evolution on $L^2({\rm SO}(3))$ is approximately controllable for almost all values of the rotational constants if and only if the dipole is not parallel to any of the principal axes of inertia of the asymmetric rigid body.

% Concerning the classical dynamics, we prove that the Hamilton equations for the asymmetric rigid body are controllable, for all rotational constants and for all configurations of the electric dipole moment. Concerning the quantum dynamics, we prove that (i) if the dipole is parallel to any of the three principal axes of inertia, the Schr\"odinger partial differential equation is not controllable due to the existence of three conserved quantities, and (ii) if the dipole is not parallel to any of the principal axes, the Schr\"odinger PDE is approximate controllable for almost every value of the rotational constants.
\end{abstract}
\textbf{Keywords:} Schr\"odinger equation, quantum control, bilinear control systems, rotational dynamics, asymmetric top molecule, Euler equations
\section{Introduction}
The controllability problem of a quantum mechanical system has fundamental applications in chemistry, physics, computer science and engineering. From a mathematical point of view, this is often translated into the study of the controllability properties of the Schr\"odinger equation. Several different techniques have been developed in the last two decades in order to obtain results on this subject, and many models have been introduced as ideal playground for applications \cite{Altafini,Coron,
BGRS,BCCS,BCS,chambrion,Glaser2015,keyl,nersesyan,laurent,CS}. In this paper, we study the symmetries and the controllability of the Schr\"odinger partial differential equation on the Lie group of rotation ${\rm SO(3)}$. This system naturally describes the quantum rotational dynamics of a rigid body, that is interpreted as a symmetric or asymmetric rotating molecule. The wave function $\psi$ of this system is an element of the unit sphere of the Hilbert space $L^2({\rm SO}(3))$, and its evolution is governed by the Schr\"odinger equation
\begin{equation}\label{eq:top}
\ii\dfrac{\partial}{\partial t} \psi(R,t)= (AP_a^2+BP_b^2+CP_c^2)\psi(R,t)-\sum_{j=1}^3u_j(t)\langle  R \delta, e_j\rangle \psi(R,t),
\end{equation}
where $\psi(\cdot,t) \in L^2({\rm SO}(3))$, $AP_a^2+BP_b^2+CP_c^2$ is the rotational Hamiltonian (that is, the Laplace-Beltrami operator of $SO(3)$ w.r.t. the diagonal Riemannian metric ${\rm diag}(A,B,C)$ ), $A\geq B\geq C\geq 0$ are the rotational constants of the rigid body (related to the inertia moments through the identitites $2A=1/I_a,2B=1/I_b,2C=1/I_c$), $P_a,P_b,P_c$ are the angular momentum operators expressed w.r.t. the principal axes of inertia $a,b,c$ of the rigid body, and $-\langle  R \delta, e_i\rangle $ is the interaction Hamiltonian between the electric dipole moment $\delta$ of the molecule and the direction $e_i$ of the electric field, $i=1,2,3$ where $e_1,e_2,e_3$ is the canonical base of $\mathbb{R}^3$. The control law $u=(u_1,u_2,u_3)\in U$ is supposed to be smooth or piecewise constant and represents an electric field applied in the three orthogonal direction $e_1,e_2,e_3$, where $U\subset \mathbb{R}^3$ is a neighbourhood of the origin. Finally, $R \in {\rm SO}(3)$ is the matrix which describes the configuration of the rigid body in the space. 

\begin{figure}[ht!]
\subfigure[]{
\includegraphics[width=0.4\linewidth, draft = false]{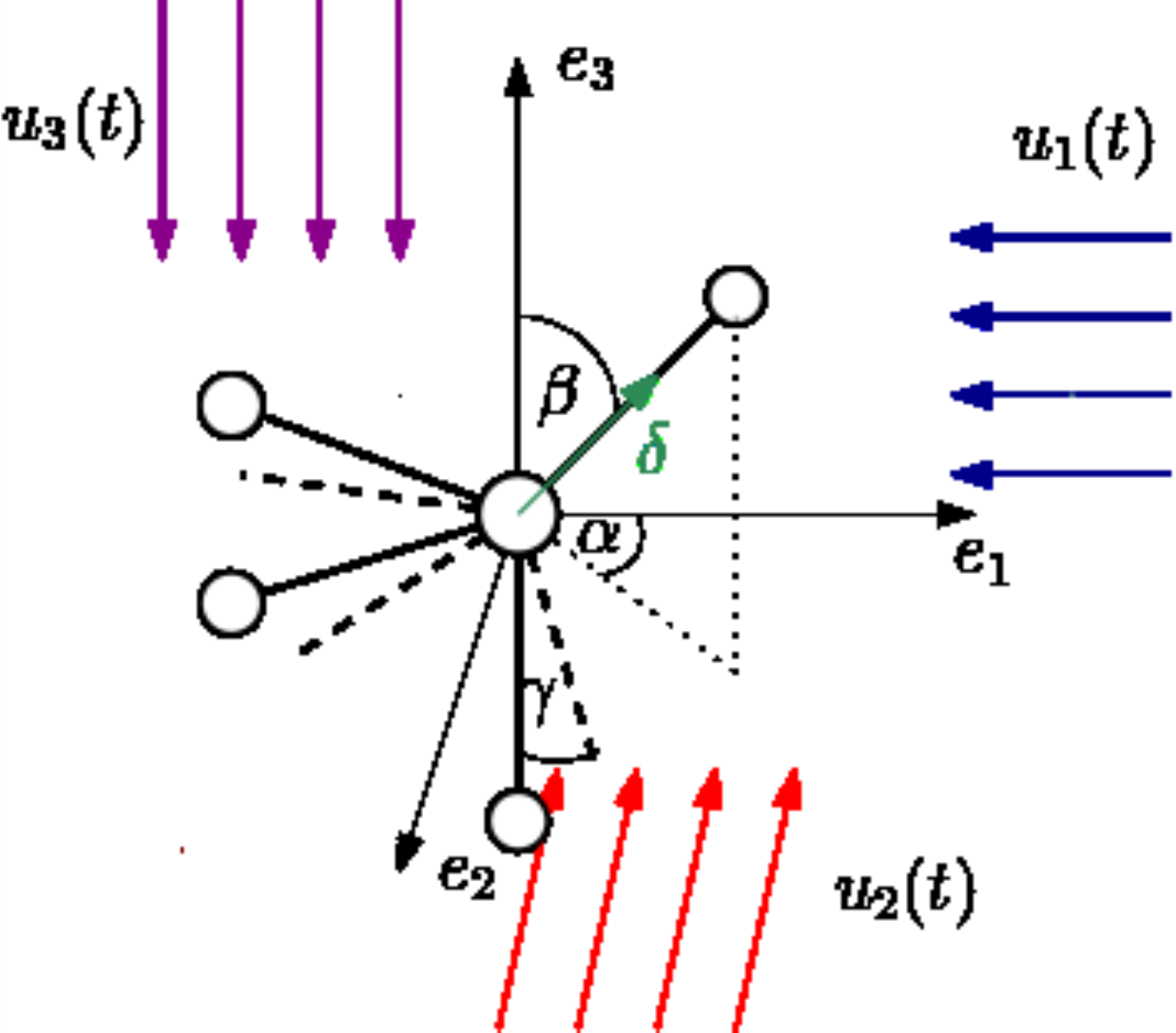} \label{symm} }\qquad\qquad\qquad
\subfigure[]{
\includegraphics[width=0.4\linewidth, draft = false]{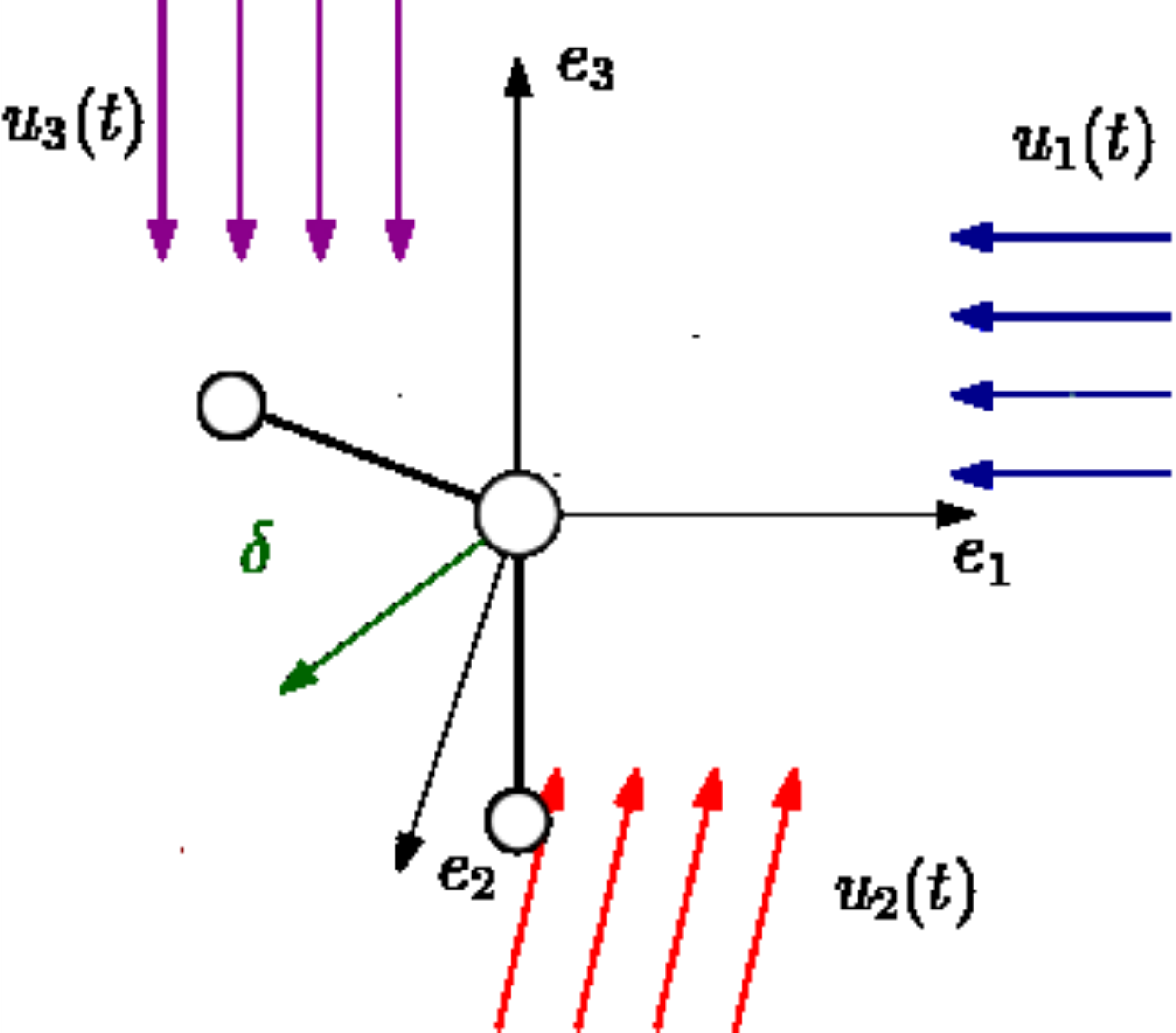} \label{asymm} }
\caption{Three orthogonal electric fields to control the rotation of \subref{symm} a symmetric molecule in $\mathbb{R}^3$ whose diagram represents, e.g., the chlorometane molecule $CH_3Cl$: its quantum rotation is not controllable, after \cite{Ugo-Mario-Io-symmetrictop}, as the electric dipole $\delta$ is parallel to the symmetry axis of the molecule; \subref{asymm} an asymmetric molecule in $\mathbb{R}^3$ whose diagram represents, e.g., the water molecule $H_2O$: its quantum rotation is not controllable, after Theorem \ref{theorem}(i), as the electric dipole $\delta=(0,0,\delta_c)^T$ is parallel to the axis of greatest inertia moment.}\label{fig:molecules}\end{figure}

 Molecules are extended objects and, under the rigid top approximation (which neglects the vibrations), are subject to the classifications in terms of their rotational constants $A\geq B\geq C\geq 0$: one distinghuishes asymmetric-tops ($A>  B>  C> 0$), prolate symmetric-tops ($A>  B= C> 0$), oblate symmetric-tops ($A= B>  C> 0$), spherical-tops ($A= B= C> 0$), and linear-tops ($A=B,\,C=0$).\\

 The general problem on whether molecular rotation is controllable goes back to the early days of quantum control: in the paper \cite{rabitz} crucial ideas were introduced and in particular a first proof of the approximate controllability of a rotating linear-top was presented. For a general overview on the controllability problem in molecular rotational dynamics we refer also to the review \cite{koch}, where the controllability problem for the Schr\"odinger evolution on ${\rm SO}(3)$ was proposed as an open problem, which is settled in this paper for almost all values of the inertia moments. 
 
 It is worth mentioning that, besides well-established applications in quantum chemistry such as microwave spectroscopy for determining molecular structure and controlling molecular reactivity, both from a theoretical \cite{rabitz-PRL-1992,Leibscher19} and an experimental \cite{dakin,PattersonNature13} point of view, rotational dynamics find new interesting applications in quantum information \cite{yu,victor}.
 
 One of the main feature of rotating molecules systems is that, even in the simpler case of a linear top, the spectrum exhibits severe increasing degeneracies at every eigenvalue. If we describe the linear top rotation with two quantum numbers $j\in\mathbb{N}$ and $m=-j,\dots,j$, which label the spherical harmonics $Y^j_m$ that are the eigenfunctions of the Laplace-Beltrami operator $\Delta_{S^2}$ on the two-sphere $S^2$ (the space of configurations of a linear molecule), each eigenvalue $E^j:=j(j+1)$ of $-\Delta_{S^2}$ is degenerate as it does not depend on $m$, with an associated eigenspace of dimension $2j+1$. For a symmetric top, an additional quantum number $k=-j,\dots,j$ is required, and corresponds to the discretization of the additional degree of freedom, which is the rotational motion about the symmetry axis. The harmonics of ${\rm SO}(3)$ are the Wigner $D$-functions $D^j_{k,m}$, for $j\in\mathbb{N}$ and $k,m=-j,\dots,j$: in particular, $D^j_{0,m}(\alpha,\beta,\gamma)=Y^j_m(\alpha,\beta)$, where $\alpha,\beta,\gamma$ denote the Euler angles as local coordinates of ${\rm SO}(3)$, and thus one recovers the linear top as a subsystem of the symmetric top by focussing on $k=0$, in analogy to the fact that $S^2$ can be recovered as the quotient space $SO(3)/S^1$, where $S^1$ denotes the group of rotation about the symmetry axis of the symmetric molecule. For symmetric molecules, the rotational eigenvalues have the following symmetry $E^j_{k}=E^j_{-k}$: besides the usual $(2j+1)$-dimensional degeneracies of the orientational quantum number $m$, also the quantum number $k$, for $k\neq 0$, has an additional $2$-dimensional degeneracy (see Figure \ref{fig:spectralsymmetrictop} for a picture of the spectral degeneracies of symmetric tops), which however vanishes in asymmetric tops. The physical explanation of the $m$-degeneracy is due to the orientational symmetry of rigid bodies: as in classical mechanics, also in quantum mechanics the rotational energy does not depend on the direction of the angular momentum. This complexity makes extremely hard the applications of techniques based on the existence of non-resonant spectral chains, developed to control infinite-dimensional discrete spectrum closed quantum systems in \cite{BCCS,BCMS,nersesyan}, which are applied to systems whose spectra are not too degenerate. Our spectral Lie algebraic technique can be applied to study the controllability problem on infinite-dimensional discrete spectrum closed quantum systems, such as \eqref{eq:top} or more generally the Schr\"odinger equation on a compact Riemannian manifold, and most importantly permits to treat drift Hamiltonians with severe degenerate spectra. The results are established by checking the controllability of an infinite family of overlapping finite-dimensional Galerkin approximations, together with non-resonant conditions on an infinite family of spectral gaps used to control the approximations. As a matter of fact, it allows to obtain approximate controllability results on linear, symmetric and asymmetric rotating tops.
 
  It is important to remark that, when the control operators are bounded, exact controllability never holds for the infinite-dimensional bilinear Schr\"odinger equation \cite{ball,turinici2,Chambrion-Caponigro-Boussaid-2020}, and one has to look for weaker properties such as approximate controllability.\\
 
  \begin{figure}[ht!]\begin{center}
\includegraphics[width=0.6\linewidth, draft = false]{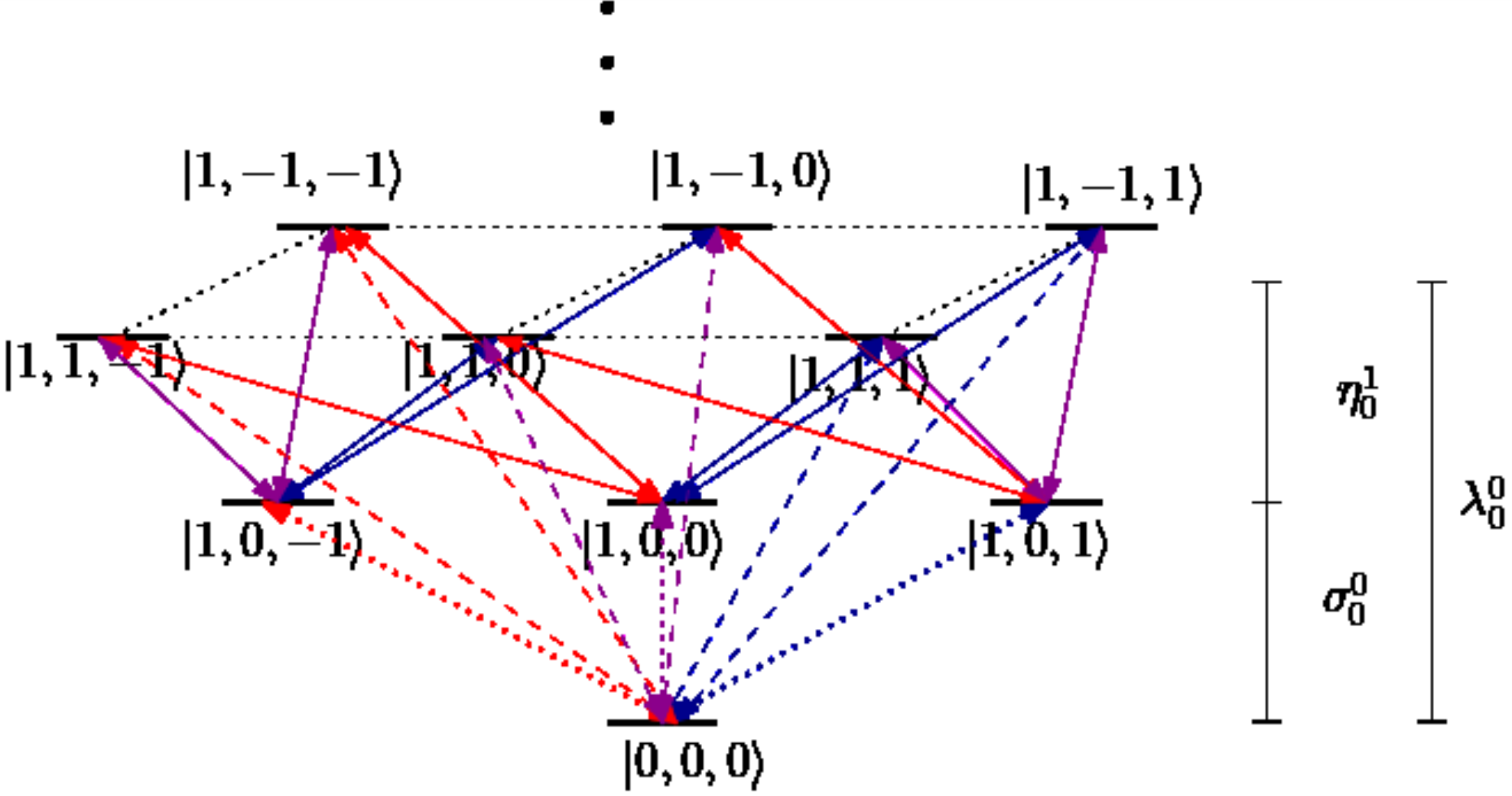}
\caption{Three-dimensional spectral graph associated with a symmetric-top, for $j=0,1$: transitions at frequencies $\lambda^0_0:=\mid E^1_1-E^0_0\mid $, $\sigma^0_0:=\mid E^1_0-E^0_0\mid $, and $\eta_0^1=\mid E^1_1-E^1_0\mid $ between the eigenstates $\mid j,k,m\rangle :=D^j_{k,m}$, driven by $H_1$ and $H_2$ (red and blue arrows), and $H_3$ (purple arrows). The dipole is not parallel nor orthogonal to the symmetry axis. Same-shaped arrows correspond to equal spectral gaps: the eigenstate $\mid 0,0,0\rangle $ corresponds to the eigenvalue $E^0_0$, the eigenstates $\mid 1,0,-1\rangle , \mid 1,0,0\rangle , \mid 1,0,1\rangle $ correspond to the eigenvalue $E^1_0$, the eigenstates $\mid 1,1,-1\rangle , \mid 1,1,0\rangle , \mid 1,1,1\rangle ,\mid 1,-1,-1\rangle , \mid 1,-1,0\rangle , \mid 1,-1,1\rangle $ correspond to the eigenvalue $E^1_{1}=E^1_{-1}$.} \label{fig:spectralsymmetrictop}\end{center}\end{figure}
 
 The approximate controllability (and stronger properties) of rotating linear tops (that is, $A=B,\, C=0$), modelled by the Schr\"odinger equation on the two-sphere $S^2$, has been established in \cite{BCS}, where the authors introduced a controllability test called the Lie-Galerking tracking condition. The extension to rotating symmetric tops has been obtained in the recent paper \cite{Ugo-Mario-Io-symmetrictop}, where a new version of the Lie-Galerkin tracking condition has been introduced and applied to classify the controllability of \eqref{eq:top} in the symmetric top cases (that is, when $A=B> C> 0$ or $A> B=C> 0$, under the nonresonant assumption $A/C\notin \mathbb{Q}$). In the present paper, we extend the results of \cite{Ugo-Mario-Io-symmetrictop} to almost every asymmetric molecule (that is, for a.e. $A> B> C> 0$). \\%using a perturbative approach from the symmetric top rotational dynamics. 
 
One of the the main difficulties in proving controllability results for rotating asymmetric molecules is that the eigenvalue equation $H\psi=E\psi$, $E\in\mathbb{R}$, $\psi \in L^2({\rm SO}(3))$, has no explicit solution when $A> B> C> 0$, contrarily to the cases in which $A=B$ or $B=C$. This is one of the main differences w.r.t. rotating symmetric molecules studied in \cite{Ugo-Mario-Io-symmetrictop,symmtop_ifac}. In order to tackle this difficulty, we adopt a perturbative approach, using the fact that the rotational dynamics of an asymmetric top can be seen as analytic perturbations of those corresponding to two associated limiting oblate and prolate symmetric tops: when the electric dipole moment $\delta$ is not along any of the principal axes of inertia, this technique allows us to extend the approximate controllability from the symmetric cases to almost every value of the inertia moments, exploiting the stability of controllability results under the effect of an analytic perturbation. 
 
 The idea of studying the controllability of quantum systems in general configurations starting from symmetric cases (even if the latter have more degeneracies) has already been exploited, e.g., in \cite{panati,mehats}.
 
  On the other hand, when the dipole $\delta$ lies along any of the principal axes of the molecule, the structure of the control operators $\langle  R \delta, e_j\rangle , j=1,2,3$, in combination with some known symmetries of the asymmetric top eigenfunctions allow us to point out the existence of explicit invariant subspaces of \eqref{eq:top}.

The main result of this paper is a classification of the controllability of \eqref{eq:top} when $A> B> C> 0$:
 \begin{theorem}\label{theorem}
 System \eqref{eq:top} satisfies the following properties:
 \begin{itemize}
 \item[(i)] If $\delta\in\{(\delta_a,0,0)^T, (0,\delta_b,0)^T, (0,0,\delta_c)^T\}$, then \eqref{eq:top} is not controllable for all $A> B> C> 0$.
 \item[(ii)] If $\delta\notin\{(\delta_a,0,0)^T, (0,\delta_b,0)^T, (0,0,\delta_c)^T\}$, then \eqref{eq:top} is approximately controllable for almost every $A> B> C> 0$.
 \end{itemize}
 \end{theorem}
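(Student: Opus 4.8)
The plan is to treat the two items by completely different mechanisms: part (i) by exhibiting explicit proper invariant subspaces, and part (ii) by an analytic-perturbation argument anchored at the two symmetric-top limits $A=B$ and $B=C$, for which approximate controllability is already known from \cite{Ugo-Mario-Io-symmetrictop}.

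For part (i), I would fix, say, $\delta=(0,0,\delta_c)^T$ (the other two cases are symmetric under relabelling the axes). The interaction Hamiltonians then reduce to $-\delta_c\langle Re_3,e_j\rangle$, i.e. the three control operators are (proportional to) the three entries of the last column of $R$, which are — up to normalization — the three first-order spherical harmonics $Y^1_m$ in the orientational variable only, independent of the rotation about the $c$-axis. The drift $AP_a^2+BP_b^2+CP_c^2$ commutes with the right-regular action of rotations about the body-fixed $c$-axis (this is the well-known $K$-symmetry: the operator $P_c$ commutes with the Hamiltonian), so the eigenspaces split according to the eigenvalue of $P_c$. One then checks that the control operators, being independent of the $c$-rotation angle, also commute with $P_c$, hence preserve each $P_c$-eigenspace; equivalently, in the Wigner basis $D^j_{k,m}$ the controls couple only states with the same $k$. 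Thus $\mathrm{span}\{D^j_{k,m}: k=0\}$ (or, more carefully, the closure of the span of any single-$k$ sector together with the parity-related $-k$ sector fixed by the asymmetric-top eigenstate symmetries) is a proper closed subspace invariant under the drift and all controls. I would write this out using the explicit symmetries of the asymmetric-top eigenfunctions under the group $D_2$ of $\pi$-rotations about the principal axes, which is what singles out the direction of $\delta$ along a principal axis, and conclude non-controllability via the standard obstruction (a bilinear system with a common proper invariant subspace for drift and controls cannot be approximately controllable).

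For part (ii), the strategy is to view the rotational constants as a point $(A,B,C)$ and consider a segment (or analytic arc) joining a generic asymmetric configuration to the prolate limit $A>B=C$ or the oblate limit $A=B>C$; along such an arc the Hamiltonian $H(A,B,C)=AP_a^2+BP_b^2+CP_c^2$ depends analytically (indeed linearly) on the parameters, and so do its eigenvalues and spectral projections in the sense of Kato's analytic perturbation theory, away from a discrete set of crossings. At the symmetric endpoint, \cite{Ugo-Mario-Io-symmetrictop} (under $A/C\notin\mathbb{Q}$, and with $\delta$ not along the symmetry axis — which is guaranteed since $\delta$ is not along any principal axis) provides an explicit exhausting sequence of Galerkin subspaces on which the Lie–Galerkin tracking condition holds: a finite family of spectral gaps that are non-resonant and a corresponding Lie algebra that acts transitively. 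Each such condition is an open, in fact generic, condition: non-resonance of finitely many gaps fails only on a proper analytic subvariety of parameter space, and the rank/transitivity condition for the associated finite-dimensional Lie algebra is likewise an open condition that, holding at the endpoint, persists on a neighbourhood minus a lower-dimensional analytic set. Taking the countable union over the Galerkin levels of these exceptional analytic sets yields a set of measure zero; outside it, the Lie–Galerkin tracking condition holds at every level, and the general approximate-controllability criterion then gives the conclusion for almost every $(A,B,C)$ with $A>B>C>0$.

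The main obstacle is part (ii), and within it the verification that the non-resonance conditions needed for the Lie–Galerkin tracking argument are genuinely non-trivial analytic constraints rather than identically satisfied or identically violated relations — i.e. that the relevant spectral gaps of $H(A,B,C)$, which have no closed form when $A>B>C$, are non-constant analytic functions of $(A,B,C)$, so that the bad sets really are proper subvarieties. I expect to handle this precisely by perturbing off the symmetric limit, where the gaps \emph{are} explicit, computing the first-order (linear) term of the relevant gap in the perturbation parameter $\varepsilon=B-C$ (or $A-B$) and checking it is nonzero; this, together with analyticity and the connectedness of the parameter domain, forces each exceptional set to have empty interior. A secondary technical point is bookkeeping: ensuring the Galerkin subspaces chosen at the symmetric endpoint remain \emph{invariant} (not merely approximately so) and that the perturbed eigenstates still assemble into the nested family required by the tracking condition; this is where the analytic continuation of spectral projections, and the care needed at the (measure-zero) eigenvalue crossings, does the real work.
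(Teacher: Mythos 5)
Your part (i) argument contains a genuine error. You assert that the drift ``commutes with the right-regular action of rotations about the body-fixed $c$-axis (the operator $P_c$ commutes with the Hamiltonian)''. This is false for an asymmetric top: $[P_c,\,AP_a^2+BP_b^2+CP_c^2]=(A-B)[P_c,P_a^2-P_b^2]\neq 0$ when $A>B$. Consequently the subspace you name, $\overline{\mathrm{span}}\{D^j_{k,m}\,:\,k=0\}$ (or any fixed-$k$ sector, even augmented by the $-k$ sector), is \emph{not} invariant under the drift, since $P_a^2-P_b^2$ couples $k$ to $k\pm 2$. What survives of the $K$-symmetry is only the $D_2$ (rotation by $\pi$ about $c$) symmetry, i.e.\ conservation of the \emph{parity} of $k$: the controls (for $\delta=(0,0,\delta_c)^T$) are independent of $\gamma$ and preserve $k$ exactly, while the drift preserves $k$ modulo $2$; the correct invariant subspaces are therefore the closed even-$k$ and odd-$k$ sectors (this is exactly how the paper argues, via the structure \eqref{eigenfunctions_bis} of the asymmetric-top eigenfunctions together with the selection rule $\langle S^j_{k,m,p},H_iS^{j'}_{k',m',p'}\rangle=0$ for $k\neq k'$). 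Your plan to dispose of the cases $\delta$ along $a$ or $b$ ``by relabelling'' can be made to work by changing the quantization convention (prolate vs.\ oblate), but as written it hides this step; the paper instead exhibits three distinct invariant decompositions in one convention, indexed by the parities of $k$, $j+p$, and $j+k+p$.

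For part (ii) your strategy is essentially the paper's: write $H(A,B,C)$ as an analytic (in fact affine) perturbation of a symmetric-top Hamiltonian along a one-parameter family, use that the non-resonance and Lie-algebra rank conditions of the Lie--Galerkin test hold at the symmetric endpoint (under an irrationality assumption on the endpoint constants) and are violated only at the zeros of non-identically-vanishing analytic functions of the asymmetry parameter, then take a countable union of null sets. One point you gloss over matters, though: the symmetric-top controllability result requires the dipole to be neither parallel \emph{nor orthogonal} to the symmetry axis, and ``$\delta$ not along any principal axis'' does not guarantee this at a chosen limit. If $\delta=(\delta_a,\delta_b,0)^T$ with $\delta_a,\delta_b\neq 0$, the oblate limit (symmetry axis $c$) is unusable because $\delta\perp c$; the paper handles exactly this case by perturbing instead from the prolate limit, where $\delta$ is neither parallel nor orthogonal to $a$. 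Your proposal allows either endpoint but does not explain that the choice is forced by this orthogonality obstruction, so as written the argument is incomplete for dipoles lying in a principal plane. The ``secondary technical point'' you flag (exact invariance of the Galerkin blocks along the perturbation) is resolved in the paper by the selection rule $\Delta j=0$ for the perturbation $P_a^2-P_b^2$, which makes each block $\mathcal{H}_j\oplus\mathcal{H}_{j+1}$ exactly invariant, and by diagonalizing the degenerate perturbation in the Wang basis.
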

 In particular, while rotating linear tops are always approximately controllable, for symmetric and asymmetric tops invariant subspaces that prevent form controllability may arise, depending on the electric dipole moment orientation. The noncontrollable cases given in Theorem \ref{theorem}(i) are relevant, as there exist in nature very simple and fundamental asymmetric molecules that have electric dipole moment along one of the principal axes of inertia (e.g., the molecule of water, see Figure \ref{fig:molecules}(b)). Nonetheless, asymmetric molecules may be very complex objects and have in general dipole components along each of the three axes of inertia (e.g., the molecule of carvone and more broadly all chiral molecules): their rotations are hence (almost always) approximately controllable after Theorem \ref{theorem}(ii). \\
 
 We conclude this introduction with a remark on the different behaviours of classical and quantum systems. A conserved quantity of \eqref{eq:top} may not have a classical counterpart, that is, a corresponding conserved quantity for the associated Hamiltonian system on the cotangent bundle manifold ${\rm SO(3)}\times \mathfrak{so}(3)^*$. More precisely, in \cite{Ugo-Mario-Io-symmetrictop} it is proven that a symmetric molecule with dipole parallel to the symmetry axis (see Figure \ref{fig:molecules}(a)) has a classical and quantum conserved quantity (that is the component of the angular momentum along the symmetry axis), while a symmetric molecule with dipole orthogonal to the symmetry axis has a quantum conserved quantity but it is classically controllable. The discrepancy between classical and quantum controllability has already been observed in harmonic oscillator dynamics, which are classically but not quantum controllable \cite{rouchon}. Motivated by the discrepancy detected in \cite{Ugo-Mario-Io-symmetrictop} for symmetric tops, in this paper we also analyze the controllability of the classical equations of rotating asymmetric tops: in particular, we show that the quantum noncontrollable cases listed in Theorem \ref{theorem}(i) are in fact classically controllable.\\
 
 \medskip
 
 The paper is organized as follows: in Section \ref{sec:classicaltop} we interpret the Hamilton equations for the rotation of a rigid body as a control-affine system with recurrent drift, and in Theorem \ref{accidentallycla} we establish the classical controllability of every asymmetric rotating molecule controlled through three orthogonal electric fields, for every configuration of the electric dipole moment. In Section \ref{sec:quantumtop} we start by recalling an approximate controllability test for the discrete spectrum bilinear Schr\"odinger equation found in \cite{Ugo-Mario-Io-symmetrictop} (see Section \ref{sec:quantumcontrol}). We then classify the controllability of \eqref{eq:top} for a.e. $A> B> C> 0$. More in detail: in Theorem \ref{thm:symmetries-asymmtop} we show the existence of three invariant subspaces arising in \eqref{eq:top} when the dipole is parallel to any of the three principal axes of inertia and
 %by combining the structures of the control operators and the eigenfunctions of the rotational Hamiltonian
in Theorem \ref{thm:asymmtopcontrol} 
%we apply the controllability tesin Section \ref{sec:quantumcontrol}, together with a perturbative argument from an associated symmetric top,
we apply in a perturbative way the test previously introduced to show the approximate controllability of \eqref{eq:top} when the dipole is not parallel to any of the principal axes, for almost every value of the rotational constants.

\section{Classical controllability of asymmetric tops}\label{sec:classicaltop}
\subsection{Control-affine systems with recurrent drift}\label{sec:controlsystems}
Given the control-affine system 
\begin{equation}\label{control}
\dot{q}=X_0(q)+\sum_{i=1}^\ell u_i(t)X_i(q), \qquad q \in M,
\end{equation} 
on an $n$-dimensional smooth manifold $M$, with \emph{drift} $X_0$ and control fields $X_1,\dots,X_\ell$ (that are supposed to be $C^\infty$ vector fields on $M$), where the control functions $u=(u_1,\dots,u_\ell)$ are taken in $L^\infty(\mathbb{R},U)$ and $U\subset \mathbb{R}^\ell$ is 
a neighborhood of the origin, we denote
the \emph{reachable set} from $q_0\in M$ as the set
\begin{align*}
 \mathrm{Reach}(q_0):= \{& q \in M \mid  \exists \;  u,T \text{ s.t. the solution to (\ref{control}) with }\\ &q(0)=q_0 \text{ satisfies } q(T)=q \}. 
 \end{align*}
 \begin{definition}
System (\ref{control}) is said to be \emph{controllable} if $\mathrm{Reach}(q_0)=M$ for all $q_0\in M$.
\end{definition}
 When the drift $X_0$ is \emph{complete},
%The family of vector fields $X_0,X_1,\dots,X_\ell$ is said to be \emph{Lie bracket generating} if 
%$$\dim(\mathrm{Lie}_q\{X_0,X_1,\dots,X_\ell \})=n $$ 
%for all $q\in M$, where $\mathrm{Lie}_q\{X_0,X_1,\dots,X_\ell \}$ denotes the evaluation at $q$ of the Lie algebra generated by $X_0,X_1,\dots,X_\ell$. 
we say that it is \emph{recurrent} if for every open nonempty subset $V$ of $M$ and every time $t> 0$, there exists $\tilde{t}> t$ such that $\phi_{\tilde{t}}(V)\cap V \neq \emptyset$, where $\phi_{\tilde{t}}$ denotes the flow of $X_0$ at time ${\tilde{t}}$.
\subsection{The classical rotational dynamics of a molecule}\label{subsec:classicaldynamics}
Given a rigid body, the translational motion of its center of mass is decoupled from the rotational motion. We thus assume that the molecule can only rotate around its center of mass. To model the control problem for the rotation of a rigid body, one considers as manifold the tangent bundle $M={\rm SO}(3)\times \mathbb{R}^3$. We denote by $e_1,e_2,e_3$ a fixed orthonormal frame of $\mathbb{R}^3$ and by $a,b,c$ the principal axes of inertia of the asymmetric ridig body, with associated rotational constants $A> B> C> 0$ related to the inertia moments through the identitites $2A=1/I_a,2B=1/I_b,2C=1/I_c$. Both frames are attached to the rigid body's center of mass. The configuration of the molecule is identified with the unique matrix $R \in {\rm SO}(3)$ such that $R\;(v_a,v_b,v_c)^T=(v_1,v_2,v_3)^T$, where $(v_a,v_b,v_c)$ are the coordinates of a vector $v$ with respect to $a,b,c$, and $(v_1,v_2,v_3)$ are the coordinates of $v$ with respect to $e_1,e_2,e_3$, for any vector $v\in \mathbb{R}^3$. We assume that the electric charge of the molecule is modelled in dipole approximation with an electric dipole moment $\delta\in\mathbb{R}^3$ fixed inside the molecular frame.

Given the Hamiltonian function 
$$H=\left(AP_a^2+BP_b^2+CP_c^2\right)+V(R),\quad  V(R)=-\sum_{i=1}^3u_i\langle R\delta, e_i\rangle  $$
on ${\rm SO}(3)\times \mathbb{R}^3$ with coordinates $(R,P)=(R,(P_a,P_b,P_c))$,
the equations for the classical rotational dynamics of a molecule are the Hamilton equations associated to $H$, which read
\begin{equation}\label{euler1}
\begin{pmatrix}
\dot{R} \\ 
\dot{P}
\end{pmatrix}=X(R,P)+\sum_{i=1}^3u_i(t)Y_i(R,P), \quad (R,P)\in {\rm SO}(3) \times \mathbb{R}^3,\ u\in U,
\end{equation}
where
 \begin{equation}\label{fields}
X(R,P):=\begin{pmatrix}
R\,s(\rho P) \\
P \times (\rho P)
\end{pmatrix}, \quad Y_i(R,P):=\begin{pmatrix}
0\\
(R\delta)\times e_i
\end{pmatrix}, \quad i=1,2,3,
\end{equation}
$\rho P=(2AP_a,2BP_b,2CP_c)^T$ and $s$ denotes the isomorphism of Lie algebras \begin{equation}\label{eq:liealgebraisom}
s:(\mathbb{R}^3,\times) \rightarrow (\mathfrak{so}(3),[\cdot,\cdot]), \quad P=
\begin{pmatrix}
P_a \\
P_b\\
P_c
\end{pmatrix} \mapsto  s(P)=
\begin{pmatrix}
0 & -P_c & P_b\\
P_c & 0 & -P_a\\
-P_b & P_a & 0
\end{pmatrix}
\end{equation}
where $\times$ is the vector product. For a derivation of \eqref{euler1}, one can see e.g. 
\cite[Section 12.2]{jurdje} (where this is done for the heavy rigid body). We recall that $(u_1,u_2,u_3)\in U\subset \mathbb{R}^3$ and $U$ is such that $(0,0,0)\in {\rm Interior}(U)$.

System \eqref{euler1} can be seen as a control-affine system with drift $X$ and control fields $Y_1,Y_2,Y_3$.

If one uses quaternions $\mathbb{H}$ instead of the rotation group to parametrize the configuration of the rigid body, making use of the double covering map $S^3\rightarrow SO(3)$ (see \cite[Section 5.2]{Ratiu} for details), system \eqref{euler1} is lifted to  
$S^3\times \mathbb{R}^3$ with coordinates $(q,P)=((q_0,q_a,q_b,q_c),(0,P_a,P_b,P_c))\in\{q_0+\ii q_a+\mathrm{j}q_b+\mathrm{k}q_c\in \mathbb{H} \mid  q_0^2+q_a^2+q_b^2+q_c^2=1\}\times \{\ii P_a+\mathrm{j}P_b+\mathrm{k}P_c \in \mathbb{H}\mid  (P_a,P_b,P_c)\in \mathbb{R}^3\}=S^3\times \mathbb{R}^3\subset \mathbb{H}^2$. The lifted system read (here, $[P,\Omega]:=(P\Omega-\Omega P)$, where $P\Omega$ is the quaternion multiplication for any $P,\Omega \in \mathbb{R}^3\subset \mathbb{H}$)
\begin{equation}\label{quaternion}
\begin{cases}
\begin{aligned}
\dfrac{dq(t)}{dt}=&q(t)\rho P(t), \\ 
\dfrac{dP(t)}{dt}=&\frac{1}{2}[P(t), \rho P(t)]+\sum_{\mathrm{h}\in\{\ii,\mathrm{j},\mathrm{k}\}}\dfrac{u_\mathrm{h}(t)}{2}[\overline{q(t)}\mathrm{h}q(t),\delta].
\end{aligned}
\end{cases}
\end{equation}
\subsection{Controllability of classical asymmetric tops}
The classical rotational dynamics of asymmetric tops are always controllable, independently of their dipole moment.

\begin{theorem}\label{accidentallycla}
Let $A> B> C> 0$ and $\delta\neq (0,0,0)^T$. Then system \eqref{euler1} is controllable.
\end{theorem}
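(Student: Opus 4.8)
The plan is to establish controllability via the standard recipe for control-affine systems with a recurrent drift: if the drift $X$ is recurrent and the Lie algebra generated by $X, Y_1, Y_2, Y_3$ is full-rank at every point of $M = \mathrm{SO}(3)\times\mathbb{R}^3$, then the system is controllable (this is the classical argument, e.g. in Jurdjevic, using the Poisson-stability of the drift to "close up" trajectories). So the proof splits into two independent verifications: recurrence of $X$, and the Lie algebra rank condition (LARC).

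First I would establish recurrence of the drift flow $\phi_t$ of $X$. With $u\equiv 0$, the $P$-equation is the free Euler equation $\dot P = P\times(\rho P)$, which is the integrable Euler top: it preserves both the energy $AP_a^2+BP_b^2+CP_c^2$ and the magnitude $|P|^2$, so trajectories in $P$-space live on compact intersections of ellipsoids with spheres (periodic orbits, plus the equilibria on the axes). Hence the drift restricted to the $P$-fibre has all orbits bounded, so its flow is volume-preserving on a compact invariant set and therefore Poisson stable there; lifting this to $\mathrm{SO}(3)\times\mathbb{R}^3$, the $R$-dynamics $\dot R = R\,s(\rho P)$ runs on the compact group $\mathrm{SO}(3)$, and the combined flow preserves the measure (Liouville) on each compact invariant set $\{|P|^2 = r^2,\ H_0 = E\}\times\mathrm{SO}(3)$. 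By the Poincaré recurrence theorem the drift is recurrent in the sense defined in Section~\ref{sec:controlsystems}. I would phrase this carefully: one needs recurrence on all of $M$, but since every point lies in such a compact invariant set and recurrence only requires returning near a given open set, this is enough.

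Second, and this is where the real work lies, I would verify the LARC at an arbitrary point $(R,P)$. The control fields $Y_i = (0, (R\delta)\times e_i)^T$ span, at each fixed $R$, the full fibre direction $\{0\}\times\mathbb{R}^3$: indeed $(R\delta)\times e_1, (R\delta)\times e_2, (R\delta)\times e_3$ span the plane orthogonal to $R\delta$, and since $\delta\neq 0$ so is $R\delta$, giving a $2$-dimensional span — not yet all of $\mathbb{R}^3$. To recover the missing fibre direction and then all six dimensions, I would compute brackets of $X$ with the $Y_i$. The bracket $[X, Y_i]$ will have a nonzero $\mathrm{SO}(3)$-component of the form $R\,s(\text{something involving }\rho((R\delta)\times e_i))$, which moves us in the base, and its $P$-component will feed back further combinations of $P$ and $R\delta$. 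Iterating brackets, one uses the fact that $A,B,C$ are pairwise distinct (so $\rho$ is a diagonal matrix with distinct entries, hence "generic" — this is exactly where $A>B>C$ is used) to show the iterated brackets generate both the full $\{0\}\times\mathbb{R}^3$ and, through the $R\,s(\cdot)$ terms, a full $3$-dimensional family of base directions $R\cdot\mathfrak{so}(3)$.

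The main obstacle I anticipate is the LARC computation at the degenerate configurations: points where $P$ is an equilibrium of the Euler top (i.e. $P$ parallel to a principal axis, so $P\times\rho P = 0$) or where $R\delta$ happens to be aligned with one of the $e_i$, or with $P$, or with $\rho P$. At such special $(R,P)$ several of the first-order brackets degenerate simultaneously, and one must go to second- or third-order brackets and use the distinctness of $A,B,C$ to break the degeneracy. Since the problem is clearly analytic, one clean way to handle this is: show the LARC holds on an open dense set by the generic computation, note that the bracket-generating property is an analytic (semialgebraic) condition, and then argue that if it failed on a lower-dimensional set one could still reach those points (they are in the closure of the good set and the drift is recurrent) — or, more robustly, just grind the special cases directly, since for each fixed alignment there is still enough freedom in $\rho P$ and $\delta$ to produce the needed brackets. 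I would present the generic case in detail and dispatch the degenerate configurations by an explicit short computation exploiting $A\neq B\neq C\neq A$.
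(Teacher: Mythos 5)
Your first half (recurrence of the drift via conservation of energy and $\lvert P\rvert^2$, Liouville measure and Poincar\'e recurrence on compact invariant sets, then a Jurdjevic-type theorem for recurrent drifts) matches the paper, which simply cites these facts. The gap is in the second half, and it is exactly at the point you flag as ``where the real work lies.'' The paper does \emph{not} verify the Lie algebra rank condition at every point of ${\rm SO}(3)\times\mathbb{R}^3$: it computes, in quaternion coordinates, an explicit determinant of six bracket fields which factorizes as $D(q,P)=S_{\delta,(A,B,C)}(q)\,\langle P,\delta\rangle$, so bracket generation is only obtained off the explicit bad set $N=\{D=0\}$. Controllability then follows not from the plain recurrent-drift theorem (which needs full rank everywhere) but from \cite[Lemma 2.2]{Ugo-Mario-Io-symmetrictop}, whose extra hypothesis is that from every point \emph{of} $N$ the reachable set is not contained in $N$; the paper proves this exit property separately (first showing $\Pi_{\mathbb{R}^3}X$ is tangent to $\{\langle P,\delta\rangle=0\}$ only at $P=0$, where the $Y_i$ are transversal, and then leaving $\{S_{\delta}(q)=0\}$ using the three-dimensional span of the projected brackets once $\langle P,\delta\rangle\neq 0$).

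Your proposal handles the degenerate configurations by one of two routes, and neither closes the argument as stated. Route (a), ``grind the special cases directly,'' is precisely the hard computation and is not carried out; it is not established (nor claimed in the paper) that the rank condition actually holds at every degenerate point, e.g.\ at $(R,0)$ with $R\delta$ in special position, where the controls span only the two-plane $(R\delta)^\perp$ and the drift vanishes. Route (b) is logically misdirected: you argue that the bad points ``could still be reached'' because they lie in the closure of the good set and the drift is recurrent, but controllability requires $\mathrm{Reach}(q_0)=M$ \emph{for initial conditions $q_0$ in the bad set}; since reachable sets of systems with drift are not symmetric, reaching the bad set from outside says nothing about steering out of it. The missing idea relative to the paper is therefore twofold: an explicit thin set off which six concrete brackets have full rank (the factorized determinant $S_{\delta}(q)\langle P,\delta\rangle$), and an escape argument from that set feeding into a lemma tailored to recurrent drifts with rank condition holding only off a thin set. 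Without one of these (or a genuine pointwise verification of the rank condition, possibly via Nagano--Sussmann analyticity arguments, which you do not invoke), the proof is incomplete.
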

\begin{proof}
First of all, the drift $X$ is recurrent, as observed 
in \cite[Section 8.4]{AS}. So, we can use \cite[Theorem 5, Section 4.6]{jurdje} to prove that \eqref{euler1} is controllable. Thus, we need to show that 
$$\mathrm{dim}\Big( \mathrm{Lie}_{(R,P)}\{X,Y_1,Y_2,Y_3\}\Big)=6 \quad \forall (R,P) \in {\rm SO}(3) \times \mathbb{R}^3$$
where $\mathrm{Lie}_{(R,P)}\{X,Y_1,Y_2,Y_3\}$ denotes the Lie algebra generated by the vector fields $X,Y_1,Y_2,Y_3$ evaluated at $(R,P)$.
The structure of the proof is the following:
we are going to find six vector fields in
$\mathrm{Lie}\{X,Y_1,Y_2,Y_3\}$ whose span is six-dimensional everywhere but on a set of positive codimension, and we conclude by applying \cite[Lemma 2.2]{Ugo-Mario-Io-symmetrictop}.

 We remark that
$[X,Y_i](R,P)=\begin{pmatrix}
-R\,s(\rho[ \delta\times (R^{-1}e_i)]) \\
\star
\end{pmatrix}.$
 Denoting by $\Pi_{{\rm SO}(3)}$ the projection onto the ${\rm SO}(3)$ part of the tangent bundle, that is, $\Pi_{{\rm SO}(3)}: T({\rm SO}(3)\times \mathbb{R}^3)\rightarrow T{\rm SO}(3),$
we have
\begin{align*}
\mathrm{span}&\{\Pi_{{\rm SO}(3)}X(R,P),\Pi_{{\rm SO}(3)}[X,Y_1](R,P),  \Pi_{{\rm SO}(3)}[X,Y_2](R, P),\Pi_{{\rm SO}(3)}[X,Y_3](R,P) \} \\ &=R\,s\Big(\rho[\{\delta\}^\perp \oplus \mathrm{span}\{P\}]\Big).
\end{align*}
Hence, when $\langle  P, \delta \rangle  \neq 0$, one has
\begin{align}\nonumber
\dim & \Big( \mathrm{span}\{\Pi_{{\rm SO}(3)}X(R,P),\Pi_{{\rm SO}(3)}[X,Y_1](R,P),\Pi_{{\rm SO}(3)}[X,Y_2](R,P),\\ &\Pi_{{\rm SO}(3)}[X,Y_3](R,P) \} \Big) 
=3 .\label{3dim}
\end{align}

We now switch to the quaternion parametrization \eqref{quaternion}, more useful for computations. We have
\[
X(q,P)=\begin{pmatrix}
q\rho P \\
\frac{1}{2}[P,\rho P]
\end{pmatrix}=\begin{pmatrix}
-2q_aAP_a-2q_bBP_b-2q_cCP_c\\[1mm]
2q_0AP_a+2q_bCP_c-2q_cBP_b\\[1mm]
2q_0BP_b-2q_aCP_c+2q_cAP_a\\[1mm]
2q_0CP_c+2q_aBP_b-2q_bAP_a\\[1mm]
2(C- B)P_bP_c \\[1mm]
2(A- C)P_aP_c\\[1mm]
2(B- A)P_aP_b
\end{pmatrix}, \]
\[
Y_1(q,P)=\begin{pmatrix}
0\\
\frac{1}{2}[\overline{q}\ii q,\delta]
\end{pmatrix}=\begin{pmatrix}
0\\
0\\
0\\
0\\
(q_aq_b-q_0q_c)\delta_c-(q_aq_c+q_0q_b)\delta_b\\
(q_aq_c+q_0q_b)\delta_a-\frac{1}{2}(q_0^2+q_a^2-q_b^2-q_c^2)\delta_c\\
\frac{1}{2}(q_0^2+q_a^2-q_b^2-q_c^2)\delta_b-(q_aq_b-q_0q_c)\delta_a
\end{pmatrix},
\]
\[
 Y_2(q,P)=\begin{pmatrix}
0\\
\frac{1}{2}[\overline{q}\mathrm{j}q,\delta]
\end{pmatrix}=\begin{pmatrix}
0\\
0\\
0\\
0\\
\frac{1}{2}(q_0^2-q_a^2+q_b^2-q_c^2)\delta_c-(q_bq_c-q_0q_a)\delta_b\\
(q_bq_c-q_0q_a)\delta_a-(q_aq_b+q_0q_c)\delta_c\\
(q_aq_b+q_0q_c)\delta_b-\frac{1}{2}(q_0^2-q_a^2+q_b^2-q_c^2)\delta_a
\end{pmatrix}.
\]
We consider the six vector fields $X,Y_1,Y_2,[X,Y_1],[X,Y_2],[[X,Y_1],Y_1]$: the determinant of the 
matrix 
obtained by removing the first row from the $7\times 6$ matrix
$$(X(q,P),Y_1(q,P),Y_2(q,P),[X,Y_1](q,P),[X,Y_2](q,P),[[X,Y_1],Y_1](q,P))$$ 
is $D(q,P):=S_{\delta,(A,B,C)}(q)\langle  P,\delta\rangle $, where 
\begin{align*}
&S_{\delta,(A,B,C)}(q)=2ABCq_0S^{(0)}_{\delta}(q)\\ &\Big[(4ABC)^{-1}S^{(1)}_{\delta}(q)S^{(2)}_{\delta}(q)+(2A)^{-1}\Big((2B)^{-1}S^{(3)}_{\delta}(q)S^{(4)}_{\delta}(q)-(2C)^{-1}S^{(5)}_{\delta}(q)S^{(6)}_{\delta}(q)\Big)\Big] 
\end{align*}
and
\begin{align*}
S^{(0)}_{\delta}(q)&=\big[q_0 (-2 q_b \delta_a + 2 q_a \delta_b) + 
  2 q_c (q_a \delta_a + q_b \delta_b) + 
  q_0^2 \delta_c - (q_a^2 + q_b^2 - q_c^2) \delta_c\big]^2,\\
 S^{(1)}_{\delta}(q)&= q_0 q_b \delta_b + q_a q_c \delta_b - q_a q_b \delta_c + q_0 q_c \delta_c,\\
 S^{(2)}_{\delta}(q)&=2 q_0 \delta_a (q_c \delta_b - q_b \delta_c) - 
     2 q_a \delta_a (q_b \delta_b + q_c \delta_c)+ 
     q_0^2 (\delta_b^2 + \delta_c^2) + 
     q_a^2 (\delta_b^2 + \delta_c^2)  \\&-(q_b^2 + 
        q_c^2) (\delta_b^2 + \delta_c^2),   \\
   S^{(3)}_{\delta}(q)&=-2 q_a q_b \delta_a + 2 q_0 q_c \delta_a + q_0^2 \delta_b + 
        q_a^2 \delta_b - (q_b^2 + q_c^2) \delta_b,\\
        S^{(4)}_{\delta}(q)&=-2 (q_0 q_b + 
           q_a q_c) (\delta_a^2 + \delta_b^2) + (q_0^2 \delta_a + 
           q_a^2 \delta_a - (q_b^2 + q_c^2) \delta_a + 
           2 q_a q_b \delta_b - 2 q_0 q_c \delta_b) \delta_c,\\
           S^{(5)}_{\delta}(q)&=-2 (q_0 q_b + q_a q_c) \delta_a + (q_0^2 + q_a^2 - q_b^2 - 
           q_c^2) \delta_c,\\
           S^{(6)}_{\delta}(q)&=q_0^2 \delta_a \delta_b + 
        q_a^2 \delta_a \delta_b - (q_b^2 + 
           q_c^2) \delta_a \delta_b + 2 q_a q_c \delta_b \delta_c- 
        2 q_a q_b (\delta_a^2 + \delta_c^2)\\ & + 
        2 q_0 [q_b \delta_b \delta_c + 
           q_c (\delta_a^2 + \delta_c^2)].
  \end{align*}
Under the assumption $A> B> C> 0$, $S_{\delta,(A,B,C)}$ is not identically zero as long as $\delta\neq 0$. 
%This is not the case for symmetric tops: e.g., if $A=B> C> 0$, $S_{\delta,(A,A,C)}$ is identically zero for a dipole $\delta=(0,0,\delta_c)$, $\delta_c\neq0$, parallel to the symmetry axis $c$ of the body.
Hence, for all $(q,P) $ such that $D(q,P)\neq0$, 
\begin{align*}
\dim\Big(\mathrm{span}&\{X(q,P),Y_1(q,P),Y_2(q,P),[X,Y_1](q,P),[X,Y_2](q,P),\\ & [[X,Y_1],Y_1](q,P) \}\Big)=6,
\end{align*}
that is, outside the set $N:=
\{(q,P)\in S^3 \times \mathbb{R}^3\mid  D(q,P)=0\}$ the 
family $X,Y_1,Y_2$
is Lie bracket generating.

We are left to show that $\mathrm{Reach}(q,P)\not\subset N$ for every $(q,P) \in N$,
and then to apply \cite[Lemma 2.2]{Ugo-Mario-Io-symmetrictop}.
Let us start by considering the factor $\langle  P,\delta\rangle $ of $D$ and notice that, for any fixed $q \in S^3$, $Q:=\{P=(P_a,P_b,P_c)\in\mathbb{R}^3\mid \langle  P,\delta\rangle =0\}$ defines a surface inside $\{q\}\times \mathbb{R}^3$. Denote by $\Pi_{\mathbb{R}^3}:T(S^3\times \mathbb{R}^3)\to T\mathbb{R}^3$ the projection onto the $\mathbb{R}^3$ part of the tangent bundle.
The vector field $\Pi_{\mathbb{R}^3}X$ is tangent to $Q=\{\langle  P,\delta\rangle =0\}$ if and only if
 \begin{align*}
 \langle  \Pi_{\mathbb{R}^3} X\mid _Q, \nabla D\mid _Q\rangle = 0&\Leftrightarrow\langle  [P,\rho P]\mid _Q,\delta\rangle =0 \\\Leftrightarrow \delta \in \mathrm{span}\{P\mid _Q,\rho P\mid _Q\}&\Leftrightarrow \delta  \in \mathrm{span}\{\rho P\mid _Q\},
 \end{align*}
  where in the second equivalence we used that $\mathrm{rank}(P,\rho P)=2$ as long as the rotational constants do not satisfy $A=B=C$, and in the last equivalence we used that $\langle  P,\delta\rangle \mid _Q=0$. Then, we obtain that $\Pi_{\mathbb{R}^3} X$ is tangent to $Q$ if and only if 
$$P=t\begin{pmatrix}
\delta_a A^{-1} \\ \delta_b B^{-1} \\ \delta_c C^{-1}
\end{pmatrix}, \;\;t\in\mathbb{R}. $$
Using again that $\langle  P,\delta\rangle \mid _Q=0$, we see that $\langle  t(\delta_a A^{-1}  , \delta_b B^{-1}  ,\delta_c C^{-1} )^T,(\delta_a,\delta_b,\delta_c)^T\rangle =0,$ implying $t=0$. Finally, we have seen that  $\Pi_{\mathbb{R}^3} X$ is tangent to $Q$ if and only if $P=0$; as $\Pi_{\mathbb{R}^3}Y_i(q,P=0)\neq 0$, for any $i=1,2,3$, we conclude that the distribution spanned by $\{\Pi_{\mathbb{R}^3} X,\Pi_{\mathbb{R}^3} Y_i\}$ is not tangent to $Q$.
Putting things together, we have proved that
$$\mathrm{Reach}(q,P)\not\subset \{\langle  P,\delta\rangle =0\}, \quad  \forall (q,P) \in \{\langle  P,\delta\rangle =0\}.$$

We now conclude: if $(q,P) \in \{(q,P)\in S^3\times \mathbb{R}^3\mid  S_{\delta,(A,B,C)}(q)=0\}$ then we fix $P$ and we get two-dimensional strata $\{ q \in S^3\mid  S_{\delta,(A,B,C)}(q)=0\} \subset S^3$. The projections of the vector fields $X,[X,Y_1],[X,Y_2],[X,Y_3]$ on the base part of the bundle span a three-dimensional vector space if $\langle  P, \delta\rangle  \neq 0$, as observed in \eqref{3dim}. So, since we have previously shown that $P$ can be steered to a point such that $\langle  P, \delta\rangle  \neq 0$, trajectories can exit the set $\{ q \in S^3\mid  S_{\delta,(A,B,C)}(q)=0\}$. This concludes the proof of the theorem.
\end{proof}

\section{Quantum controllability of asymmetric tops}\label{sec:quantumtop}
\subsection{The discrete spectrum Schr\"odinger equation}
Let $\ell \in \mathbb{N}$ 
and $U\subset \mathbb{R}^\ell$ be a neighborhood of the origin. 
 Let $\mathcal{H}$ be an infinite-dimensional Hilbert space with scalar product $\langle  \cdot, \cdot \rangle  $ (linear in the first entry and conjugate linear in the second), $H$ be an unbounded self-adjoint operator with domain $\mathcal{D}(H)$, $H_1,\dots,H_\ell$ be bounded self-adjoint operators on $\mathcal{H}$. We consider the multi-input bilinear Schr{\"o}dinger equation
\begin{equation}\label{quantumcontrol}
\ii\frac{d\psi(t)}{dt}=(H+\sum_{j=1}^\ell u_j(t)H_j)\psi(t), \quad \psi(t) \in \mathcal{H}, \quad u(t) \in U.
\end{equation}
The main assumption is that $H$ has discrete spectrum with infinitely many distinct eigenvalues (possibly degenerate),
and we denote by $\mathcal{B}$ a Hilbert basis $\{\phi_k\}_{k\in \mathbb{N}}$ of $\mathcal{H}$ made of eigenvectors of $H$ associated with the family of eigenvalues $\{\lambda_k\}_{k\in \mathbb{N}}$.

Then, for every $(u_1,\dots,u_\ell)\in U$, $H+\sum_{j=1}^\ell u_jH_j$ generates a strongly continuous one-parameter group $e^{-\ii t(H+\sum_{j=1}^\ell u_jH_j)}$ of unitary operators on $\mathcal{H}$. One can therefore define the propagator $\Gamma_T^u$ at time $T$ of system (\ref{quantumcontrol}) associated with a 
piecewise constant control law $u(\cdot)=(u_1(\cdot),\dots,u_\ell(\cdot))$ by composition of flows of the type $e^{-\ii t(H+\sum_{j=1}^\ell u_jH_j)}$. 
\begin{definition}
\begin{itemize}
\item Given $\psi_0,\psi_1$ in the unit sphere $\mathcal{S}$ of $\mathcal{H}$, we say that $\psi_1$ is reachable from $\psi_0$ if there exist a time $T> 0$ and a piecewise constant control law $u:[0,T]\rightarrow U$ such that $\psi_1=\Gamma_T^u(\psi_0)$. We denote by $\mathrm{Reach}(\psi_0)$ the set of reachable points from $\psi_0$.
\item We say that (\ref{quantumcontrol}) is approximately controllable if for every $\psi_0\in \mathcal{S}$ the set $\mathrm{Reach}(\psi_0)$ is dense in $\mathcal{S}$.
\end{itemize}
\end{definition}
Equivalently, \eqref{quantumcontrol} is approximately controllable if for every  $\psi_0,\psi_1\in \mathcal{S}$ and every $\epsilon> 0$ there exists a piecewise constant control $u:[0,T]\rightarrow U$ such that $\| \Gamma_T^u(\psi_0)-\psi_1\| <  \epsilon. $
\subsection{An approximate controllability criterium}\label{sec:quantumcontrol}
Let $\{I_j\mid  j\in \mathbb{N}\}$ be a family of finite subsets of $\mathbb{N}$ such that $\cup_{j\in \mathbb{N}}I_j=\mathbb{N}$. 
Denote by $n_j$
the cardinality of $I_j$. Consider the subspaces 
\[\mathcal{M}_j:=\mathrm{span}\{\phi_n \mid  n\in I_j\}\subset \mathcal{H}\]
and their associated orthogonal projections 
\[\Pi_{\mathcal{M}_j}:\mathcal{H}\ni \psi \mapsto \sum_{n\in I_j}   \langle  \phi_n,\psi \rangle  \phi_n \in \mathcal{H}.
\]

 We project the control system \eqref{quantumcontrol} on $\mathcal{M}_j$ by defining the operators $H^{(j)}:= \Pi_{\mathcal{M}_j} H \Pi_{\mathcal{M}_j}$ and 
$H_i^{(j)}:= \Pi_{\mathcal{M}_j} H_i \Pi_{\mathcal{M}_j}$ 
for every $i=1,\dots,\ell$.
The set $\Sigma_j=\{\mid \lambda_l-\lambda_{l'}\mid \mid  l,l'\in I_j \}$ is then the collection of the spectral gaps of 
$\Pi_{\mathcal{M}_j} H \Pi_{\mathcal{M}_j}$.

For every $\sigma \geq 0$, and every $n$-dimensional matrix $M=(M_{l,k})_{l,k=1}^n$ (where $n$ is possibly $\infty$), let $\mathcal{E}_\sigma(M)$ be the $n$-dimensional matrix defined for every $l,k=1,\dots,n$ as
$$(\mathcal{E}_\sigma(M))_{l,k}=\begin{cases}
M_{l,k} & \text{ if } \mid \lambda_l-\lambda_k\mid =\sigma\\
0  & \text{ otherwise. }
\end{cases} $$
where $\delta_{l,k}$ is the Kronecker delta. The $n_j\times n_j$ matrix $\mathcal{E}_\sigma(H_i^{(j)})$ corresponds to the excitation in $H_i^{(j)}$ of the spectral gap $\sigma\in \Sigma_j$.

We introduce the sets
\begin{align*}
\Xi_j^0=\left\{(\sigma,i)\in \Sigma_j \times \{1,\dots,\ell\} \mid 
\mathcal{E}_\sigma(H_i)=\left[
\begin{array}{c| c| c}
0 & 0 &0\\
\hline
0 & \mathcal{E}_\sigma(H_i^{(j)}) &0\\
\hline
0 & 0 &0
\end{array}
\right]
 \right\},
\end{align*}
and
\[
\Xi_j^1=\left\{(\sigma,i)\in \Sigma_j \times \{1,\dots,\ell\} \mid  \mbox{}  \mathcal{E}_\sigma( H_i)=\left[
\begin{array}{c| c| c}
* & 0 &*\\
\hline
0 & \mathcal{E}_\sigma( H_i^{(j)}) &0\\
\hline
* & 0 &*
\end{array}
\right]  \right\},
\]
where the operator $\mathcal{E}_\sigma(H_i)$ is represented as an $\infty$-dimensional matrix w.r.t. the Hilbert basis $\{\phi_k\}_{k\in\mathbb{N}}$. 

Given $\sigma\in\Sigma_j, \sigma\neq 0$, such that $(\sigma,i)\in\Xi_j^1$, then transitions between eigenstates $\phi_l$ and $\phi_k$ that are resonant with $\sigma$ and coupled by $H_i$ (that is, $\sigma=\mid \lambda_l-\lambda_k\mid $ and $\langle  \phi_l,H_i\phi_k\rangle \neq 0$) are allowed as long as $\phi_l,\phi_k\in \mathcal{M}_j$ or $\phi_l,\phi_k\notin \mathcal{M}_j$ (so, they are not allowed if $\phi_l\in\mathcal{M}_j$ and $\phi_k\notin\mathcal{M}_j$ or $\phi_l\notin\mathcal{M}_j$ and $\phi_k\in\mathcal{M}_j$). If instead $(\sigma,i)\in\Xi_j^0$, then transitions between eigenstates $\phi_l$ and $\phi_k$ that are resonant with $\sigma$ and coupled by $H_i$  are allowed as long as $\phi_l,\phi_k\in \mathcal{M}_j$ (so, they are not allowed if $\phi_l,\phi_k\notin \mathcal{M}_j$, or $\phi_l\in\mathcal{M}_j$ and $\phi_k\notin\mathcal{M}_j$ or $\phi_l\notin\mathcal{M}_j$ and $\phi_k\in\mathcal{M}_j$).

%While the set $\Xi_j^1$ compares only with pairs of states $\phi_{l},\phi_k$ with $\phi_l$ in ${\cal M}_j$, such a requirement is not present in the definition if $\Xi_j^0$. 

We then define
\begin{equation}\label{eq:modes}
\nu_{j}^s:= \{\ii H^{(j)},\mathcal{E}_\sigma(\ii H_i^{(j)}) \mid  (\sigma,i)\in \Xi_{j}^s, \sigma \neq 0\}, \quad s=0,1.
\end{equation}
and notice that $\nu_{j}^0 \subset \nu_{j}^1 \subset \mathfrak{u}(n_j)$ (or $\mathfrak{su}(n_j)$ if $H^{(j)}$ and the $H_i^{(j)}$ are traceless).

 We denote by $\mathrm{Lie}(\nu_{j}^s)$ the Lie subalgebra  of $\mathfrak{u}(n_j)$ generated by the matrices in $\nu_{j}^s$, $s=0,1$,  and define 
$\mathcal{T}_j$ as the minimal ideal of $ \mathrm{Lie}(\nu_j^1)$ containing $\nu_j^0$.

Finally, we introduce the graph $\mathcal{G}$ with vertices $\mathcal{V}=\{I_j\mid  j\in \mathbb{N}\}$ and edges $\mathcal{E}=\{(I_j,I_k)\mid  j,k\in\mathbb{N},\;I_j\cap I_k\neq \emptyset\}$. We have the following test for the approximate controllability of \eqref{quantumcontrol}:
\begin{theorem}(\cite{Ugo-Mario-Io-symmetrictop})\label{LGTC}
If the graph $\mathcal{G}$ is connected and $\mathfrak{su}(n_j)\subset\mathcal{T}_j$ for every $j\in \mathbb{N}$, then \eqref{quantumcontrol} is approximately controllable.
\end{theorem}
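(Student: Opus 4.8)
The plan is to follow the Lie--Galerkin tracking scheme underlying \cite{Ugo-Mario-Io-symmetrictop,BCS}, adapted to the degenerate multi-input setting. The starting observation is that $H$ is diagonal in the eigenbasis $\{\phi_k\}$, so each $\mathcal{M}_j$ is exactly invariant under the free propagator $e^{-\ii tH}$, on which it restricts to $e^{-\ii tH^{(j)}}$; passing to the interaction picture $\psi\mapsto e^{\ii tH}\psi$, the problem becomes that of generating a rich enough set of unitary propagators. I would then split the argument into three parts: (a) for each fixed $j$, show that the reachable propagators contain (approximately) every global unitary that acts as a prescribed element of $SU(n_j)$ on $\mathcal{M}_j$ and as the identity on the orthogonal complement $\mathcal{M}_j^\perp$; (b) use connectedness of $\mathcal{G}$ to combine these ``block'' unitaries into the full special unitary group of any finite connected union $\bigcup_{j\in J}I_j$; (c) conclude by density of finitely supported states in $\mathcal{S}$ together with time-reversibility, which lets one replace a fixed reference eigenstate by an arbitrary initial datum.

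For (a), fix $j$. The elementary building block is a control resonant with a single nonzero gap $\sigma\in\Sigma_j$: letting $u_i$ oscillate at frequency $\sigma$ (or, for piecewise constant controls, interleaving free evolutions with small constant kicks in the spirit of \cite{BCCS}), an averaging/rotating-wave argument produces, in the interaction picture, an effective generator proportional to $\mathcal{E}_\sigma(\ii H_i)$, i.e.\ only the matrix elements between eigenstates whose gap equals $\sigma$ survive. The role of the sets $\Xi_j^0,\Xi_j^1$ is precisely that, for $(\sigma,i)$ in either of them, $\mathcal{M}_j$ is \emph{invariant} under this effective generator: for $(\sigma,i)\in\Xi_j^1$ it is block-diagonal for the splitting $\mathcal{H}=\mathcal{M}_j\oplus\mathcal{M}_j^\perp$, so $\mathcal{M}_j$ evolves by $\mathcal{E}_\sigma(\ii H_i^{(j)})$ while $\mathcal{M}_j^\perp$ evolves on its own; for $(\sigma,i)\in\Xi_j^0$ it moreover annihilates $\mathcal{M}_j^\perp$. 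Thus $\nu_j^1$ is realizable on $\mathcal{M}_j$, but only the ``safe'' subfamily $\nu_j^0$ is realizable with no action at all on $\mathcal{M}_j^\perp$. To obtain iterated brackets on $\mathcal{M}_j$ without leakage I would use the conjugation-sandwich trick: if $\exp(tB)$ is the identity on $\mathcal{M}_j^\perp$ (i.e.\ $B\in\nu_j^0$) and $A\in\nu_j^1$, then $\exp(A)\exp(tB)\exp(-A)$ is again the identity on $\mathcal{M}_j^\perp$ (the outer factors cancel there) and realizes $\exp(t\,\mathrm{Ad}_{\exp A}B)$ on $\mathcal{M}_j$; iterating this, together with the standard commutator-of-flows limit, shows that exactly the minimal ideal $\mathcal{T}_j$ of $\mathrm{Lie}(\nu_j^1)$ containing $\nu_j^0$ is approximately realizable on $\mathcal{M}_j$ with vanishing leakage. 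Since $\mathfrak{su}(n_j)\subset\mathcal{T}_j$ by hypothesis and $SU(n_j)$ acts transitively on the unit sphere of $\mathcal{M}_j$, part (a) follows.

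For (b)--(c): given $\psi_1\in\mathcal{S}$ and $\epsilon>0$, choose a finitely supported $\tilde\psi_1=\sum_{k\in F}c_k\phi_k$ with $\|\psi_1-\tilde\psi_1\|<\epsilon$; then $F\subset\bigcup_{j\in J}I_j$ for some finite $J$, and connectedness of $\mathcal{G}$ lets me enlarge $J$ to a connected subgraph containing a fixed index $j_0$. A standard fact --- provable by an induction/pebble-moving argument on the graph, using that consecutive blocks in the subgraph share an index and that an $SU(n_j)$ rotation leaves every state outside $\mathcal{M}_j$ untouched --- is that the block unitaries from part (a) generate the special unitary group of $\bigcup_{j\in J}I_j$, which acts transitively on the unit sphere of the corresponding finite-dimensional span. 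Hence $\tilde\psi_1$, and so a state $\epsilon$-close to $\psi_1$, is reachable from $\phi_{k_0}$ for $k_0\in I_{j_0}$; applying the same reasoning to $\psi_0$ and using that the closure of the set of reachable propagators is stable under inversion (time-reversibility of the Schr\"odinger flow), one gets approximate reachability of $\psi_1$ from an arbitrary $\psi_0$, which is the claim.

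The step I expect to be the main obstacle is making the averaging rigorous in the \emph{degenerate} regime while insisting on \emph{exact}, not merely approximate, invariance of $\mathcal{M}_j^\perp$ under the safe generators: a single gap $\sigma$ can be resonant with many transitions at once --- some inside $\mathcal{M}_j$, some inside $\mathcal{M}_j^\perp$ --- and near-resonant with still others, so small-denominator contributions must be controlled quantitatively. I would resolve this with the time-rescaling technique of \cite{BCCS}: the non-resonant couplings vanish in the high-frequency limit with an error that is summable over the finitely many bracket operations needed to reach a prescribed accuracy, so the leakage accumulated on $\mathcal{M}_j^\perp$ and the error accumulated on $\mathcal{M}_j$ can both be driven below $\epsilon$.
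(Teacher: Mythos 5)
You are reviewing a statement the paper itself does not prove: Theorem \ref{LGTC} is imported from \cite{Ugo-Mario-Io-symmetrictop}, and your sketch follows essentially the same Lie--Galerkin route used there and in \cite{BCS,BCCS} — interaction picture, excitation of a single spectral gap with the $\Xi_j^0/\Xi_j^1$ structure guaranteeing that $\mathcal{M}_j$ does not couple to its complement, conjugation to realize the minimal ideal $\mathcal{T}_j\supset\mathfrak{su}(n_j)$ on $\mathcal{M}_j$, and connectedness of $\mathcal{G}$ together with a density/reversibility argument to pass from overlapping blocks to arbitrary states. The analytic core you flag (rigorous rotating-wave/averaging with non-resonant error control, where exact ``identity on $\mathcal{M}_j^\perp$'' can be relaxed to mere invariance of the complement) is exactly what the tracking lemmas of those references supply, so your outline is consistent with the established proof.
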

\begin{remark}
As a byproduct of the geometric control theory behind the proof of Theorem \ref{LGTC}, one actually obtains slightly stronger controllability results with the same assumptions: e.g., exact controllability in projections \cite{CS}, tracking results \cite{BCS}, approximate controllability of the density matrix \cite{BCCS,BCS} and in finer $H^s$-topologies \cite{weaklycoupled,BCS}. It is also worth mentioning that Theorem \ref{LGTC} can be applied even in the presence of unbounded control operators $H_i$, and the approximate controllability can be obtained also with different classes of control laws $u$ (e.g., smooth controls) \cite{CS}. For a well-defined notion of solution with unbounded control operators and weak classes of control laws, see also \cite{Chambrion-Caponigro-Boussaid-2020}.  
\end{remark}

\subsection{The Schr{\"o}dinger equation of a rotating molecule}\label{sec:quantumrotatingtop}
%We start by recalling some facts about the theory of angular momentum in quantum mechanics (the references we follow are \cite{biedenharn,Varshalovich,gordy}).

We use Euler's angles $(\alpha,\beta,\gamma)\in[0,2\pi)\times[0,\pi]\times [0,2\pi)$ to parametrize the configuration space ${\rm SO}(3)$ of the molecule. As we have already recalled in the description of a classical rotating rigid body, the space fixed frame $e_1,e_2,e_3$ is related to the body fixed frame $a,b,c$ (made of principal axes of inertia, chosen such that the rotational constants satisfy $A> B> C$) through a rotation $R\in SO(3)$. We adopt the following convention for the Euler's angles: $R(\alpha,\beta,\gamma)=R_{e_3}(\alpha)R_{e_2}(\beta)R_{e_3}(\gamma)$, that is, $R$ is a composition of three rotations, where $R_{e_i}(\theta)\in {\rm SO}(3)$ is the rotation of angle $\theta$ about the axis $e_i$. 
%More precisely, let us denote via $a_1,a_2,a_3$ the coordinates of a vector change from the body fixed frame $a_1,a_2,a_3$ to the space fixed frame $e_1,e_2,e_3$ via three rotations
%\begin{equation}\label{3rot}
%\begin{pmatrix}
%X\\
%Y\\
%Z
%\end{pmatrix}=R_{e_3}(\alpha)R_{e_2}(\beta)R_{e_3}(\gamma)\begin{pmatrix}
%x\\
%y\\
%z
%\end{pmatrix}=:R(\alpha,\beta,\gamma)\begin{pmatrix}
%x\\
%y\\
%z
%\end{pmatrix}
%\end{equation}
%where $(x,y,z)^T$ are the coordinates of the vector in the body fixed frame, $(X,Y,Z)^T$ are the coordinates of the vector in the space fixed frame and $R_{e_i}(\theta)\in {\rm SO}(3)$ is the rotation of angle $\theta$ around the axis $e_i$.

The explicit expression of the matrix $R(\alpha,\beta,\gamma)\in {\rm SO}(3)$ is 
\begin{equation}\label{rotation}
R=\begin{pmatrix}
\cos\alpha \cos\beta \cos\gamma-\sin\alpha \sin\gamma & -\cos\alpha \cos\beta \sin\gamma-\sin\alpha \cos\gamma & \cos\alpha \sin\beta\\
\sin\alpha \cos\beta \cos\gamma+\cos\alpha \sin\gamma &-\sin\alpha \cos\beta \sin\gamma+\cos\alpha \cos\gamma & \sin\alpha \sin\beta\\
-\sin\beta \cos\gamma& \sin\beta \sin\gamma& \cos\beta
\end{pmatrix}.
\end{equation}

The angular momentum operators w.r.t. the body fixed frame are given by the vector fields on ${\rm SO}(3)$
\begin{equation}\label{eq:P_i}
\begin{cases}
\begin{aligned}
P_1&=-\ii \cos\gamma\frac{\cos\beta}{\sin\beta}\frac{\partial }{\partial \gamma}+\ii \frac{\cos\gamma}{\sin\beta}\frac{\partial }{\partial \alpha}-\ii \sin\gamma \frac{\partial }{\partial \beta}  ,\\
P_2&=\ii\sin\gamma\frac{\cos\beta}{\sin\beta}\frac{\partial }{\partial \gamma}-\ii \frac{\sin\gamma}{\sin\beta}\frac{\partial }{\partial \alpha} -\ii \cos\gamma \frac{\partial }{\partial \beta} , \\
P_3&= -\ii  \frac{\partial }{\partial \gamma}.
\end{aligned}
\end{cases}
\end{equation}
which are seen as unbounded differential self-adjoint operators acting on the Hilbert space $L^2({\rm SO}(3))$ (we consider $L^2({\rm SO(3)})$ endowed with the $L^2$-scalar product w.r.t. the Haar measure $\frac{1}{8}d\alpha d\gamma \sin\beta d\beta$ of ${\rm SO}(3)$). By defining the vector $\vec{P}=(P_1,P_2,P_3)$, one can represent the angular momentum w.r.t. the space fixed frame through a rotation: $\vec{J}:=R\vec{P}$. In particular, $J_3=-\ii\partial/\partial \alpha$.

% \begin{equation}\label{wigner}
%\begin{cases}
%\begin{aligned}
%J_1&=\ii\cos\alpha \cot\beta \dfrac{\partial}{\partial \alpha}+\ii\sin\alpha \dfrac{\partial}{\partial \beta} -\ii\dfrac{\cos\alpha}{\sin \beta} \dfrac{\partial}{\partial \gamma}   ,  \\ 
%J_2&=\ii\sin\alpha \cot\beta \dfrac{\partial}{\partial \alpha}-\ii\cos\alpha \dfrac{\partial}{\partial \beta} -\ii\dfrac{\sin\alpha}{\sin \beta} \dfrac{\partial}{\partial \gamma}  ,  \\
%J_3&=-\ii\dfrac{\partial}{\partial \alpha},
%\end{aligned}
%\end{cases}
%\end{equation}

The operator $P_3$ is the angular momentum component along the axis of quantization that conventionally is identified with the symmetry axis for a symmetric rigid body. We identify each of the three components $P_a,P_b,P_c$ with one of the $P_i$, $i=1,2,3$. We are going to need two different identifications, (i) if the molecule is oblate symmetric, that is $A=B> C> 0$, the symmetry axis of the rigid body is $c$ and we set $P_3=P_c$ and $P_1=P_b, P_2=P_a$: of course, this choice can be made also if the molecule is asymmetric, and is called the oblate convention \cite[Table 7.3]{gordy}; (ii) if the molecule is prolate symmetric, that is $A> B=C> 0$, the symmetry axis of the rigid body is $a$ and we set $P_3=P_a$, and $P_1=P_b,P_2=P_c$: this choice is called the prolate convention \cite[Table 7.3]{gordy}.

% Using (\ref{wigner}), the self-adjoint operator $P_3:=-\ii\frac{\partial}{\partial \gamma}$ can be written as $P_3=\sin\beta \cos\alpha j_1+\sin\beta \sin\alpha j_2+\cos\beta j_3$, that is,
%\[
%P_3=\sum_{i=1}^3R_{i3}(\alpha,\beta,\gamma)J_i,
%\]
%where $R=(R_{ij})_{i,j=1}^3$ is given in (\ref{rotation}).

 %n the same way we define $P_1=\sum_{i=1}^3R_{i1}(\alpha,\beta,\gamma)J_i,P_2=\sum_{i=1}^3R_{i2}(\alpha,\beta,\gamma)j_i$. The operators $J_i$ and $P_i$, $i=1,2,3$, are the angular momentum operators expressed in the fixed and in the body frame, respectively.
 The rotational Hamiltonian of a molecule is \cite[Chapter 7]{gordy} $$H=AP_a^2+BP_b^2+CP_c^2,$$ which is an unbounded self-adjoint differential operator acting on $L^2({\rm SO}(3))$. The interaction Hamiltonian between the electric dipole moment $\delta=(\delta_a,\delta_b,\delta_c)^T$ and the electric field in the direction $e_i$, is given by the Stark effect \cite[Chapter 10]{gordy}
$$H_i(\alpha,\beta,\gamma)=-\langle  R(\alpha,\beta,\gamma) \delta, e_i\rangle,\quad i=1,2,3,$$ seen as a bounded self-adjoint operator of multiplication acting on $L^2({\rm SO}(3))$. 
%We have
%\begin{eqnarray}\label{eq:xyz}
%H_1&=&-(\cos\alpha \cos\beta \cos\gamma-\sin\alpha \sin\gamma) \delta_1+(\cos\alpha \cos\beta \sin\gamma-\sin\alpha \cos\gamma)\delta_2- (\cos\alpha \sin\beta) \delta_3 \nonumber\\
%H_2&=&-(\sin\alpha \cos\beta \cos\gamma+\cos\alpha \sin\gamma)\delta_1+(\sin\alpha \cos\beta \sin\gamma+\cos\alpha \cos\gamma )\delta_2-(\sin\alpha \sin\beta)\delta_3 \nonumber\\
%H_3&=&(\sin\beta \cos\gamma)\delta_1-(\sin\beta \sin\gamma)\delta_2-(\cos\beta)\delta_3.\nonumber\\
%\end{eqnarray}
%where $(\delta_1,\delta_2,\delta_3)^T=(\delta_b,\delta_a,\delta_c)^T$ w.r.t. the oblate convention, and $(\delta_1,\delta_2,\delta_3)^T=(\delta_b,\delta_c,\delta_a)^T$ w.r.t the prolate one.
The rotational Schr{\"o}dinger equation for a rigid molecule subject to three orthogonal 
electric fields reads
\begin{equation}\label{schro}
\ii\dfrac{\partial}{\partial t} \psi(\alpha,\beta,\gamma;t)= H\psi(\alpha,\beta,\gamma;t)+\sum_{l=1}^3u_l(t)H_l(\alpha,\beta,\gamma)\psi(\alpha,\beta,\gamma;t), 
\end{equation}
with $\psi(t) \in L^2({\rm SO}(3))$ and $u(t) \in U$, for some neighborhood $U$ of $0$ in $\mathbb{R}^3$.

\subsection{Harmonics on asymmetric tops} In this section we recall some facts on angular momentum theory, following \cite{Varshalovich,gordy}.
 %We consider the square norm operator $J^2:=J_1^2+J_2^2+J_3^2=P_1^2+P_2^2+P_3^2$. The self-adjoint operators $J^2,J_3,P_3$ can be considered as the three commuting observables needed to describe the quantum motion of a molecule. 
%Indeed, 
%$$[J^2,J_3]=[J^2,P_3]=[J_3,P_3]=0,$$
%and hence there exists an orthonormal Hilbert basis of $L^2({\rm SO}(3))$ which diagonalizes simultaneously $J^2,J_3$ and $P_3$.
Since $H$ is the Laplace-Beltrami operator of a compact Riemannian manifold, there exists an orthonormal Hilbert basis of $L^2({\rm SO}(3))$ made of eigenfunctions of $H$.

 When $A=B$ or $B=C$ this basis is made by the so-called Wang functions \cite[Section 7.2]{gordy}
\begin{equation}\label{eq:wangfunctions}
S_{0,m,0}^j:=D_{0,m}^j,\qquad 
S_{k,m,p}^j:=\dfrac{1}{\sqrt{2}}\left(D_{k,m}^j+(-1)^p D_{-k,m}^j\right), \quad k=1,\dots,j,
\end{equation}
for $j\in \mathbb{N}$, $m=-j,\dots,j$, and $p=0,1$, where 
\begin{equation}\label{explicit}
 D_{k,m}^j(\alpha,\beta,\gamma):=e^{\ii (k\gamma+m\alpha)}d_{k,m}^j(\beta), \qquad j\in \mathbb{N},\quad k,m=-j,\dots,j,
 \end{equation}
are the Wigner $D$-functions and the $d^j_{k,m}$ satisfy a suitable Legendre differential equation \cite{Varshalovich}. %The family of Wigner $D$-functions $\{ D_{k,m}^j\mid  j\in \mathbb{N},k,m=-j,\dots,j \}$ forms an orthonormal Hilbert basis for $L^2({\rm SO}(3))$, and
%The domain of definition for the rotational Hamiltonian is thus set as $\mathcal{D}(H):={\rm span}\{S^j_{k,m,p}, j\in\mathbb{N},m=-j,\dots,j,\,k=0,\dots,j,\,p=0,1\}$. 
%Moreover,
%\begin{equation}\label{eq:Deigenvalues}
%J^2 D_{k,m}^j=j(j+1) D_{k,m}^j, \quad J_3 D_{k,m}^j=mD_{k,m}^j, \quad P_3D_{k,m}^j=kD_{k,m}^j.
%\end{equation}
%Thus, $m$ and $k$ are the quantum numbers that correspond to the projections of the angular momentum on the third axis of, respectively, the fixed and the moving frame. 
If the molecule is symmetric, e.g. oblate, by imposing the symmetry relation $A=B$ one can write the rotational Hamiltonian in the following form
$$H(A,A,C)=AJ^2-(A-C)P_c^2,$$
and the eigenvalues of $H$ are given by \cite{gordy}
\begin{equation}\label{spectrum}
HS_{k,m,p}^j=(Aj(j+1)-(A-C)k^2)S_{k,m,p}^j=:E_k^j(A,C)S_{k,m,p}^j.
\end{equation}
for all $j\in\mathbb{N}$, $m=-j,\dots,j$, $k=0,\dots,j$, $p=0,1$. For prolate symmetry (i.e., $B=C$) one analogously obtains $H(A,C,C)=CJ^2+(A-C)P_a^2$,
with eigenvalues $HS_{k,m,p}^j=(Cj(j+1)+(A-C)k^2)S_{k,m,p}^j=:E_k^j(C,A)S_{k,m,p}^j$.
%So, the Wigner $D$-functions are also the eigenfunctions of the rotational Hamiltonian of symmetric molecules. 
Since the eigenvalues 
of $H$ do not depend on $m$, the energy level $E_k^j$ is $(2j+1)$-degenerate with respect to $m$.
% this fact is usually named orientational degeneracy, and $m$ the orientational quantum number.
%This property is common to every molecule in nature: the spectrum does not depend on $m$, just like in classical mechanics the kinetic energy does not depend on the orientation of the angular momentum. 
Moreover, since they do not depend on $p$, when $k \neq 0$ the energy level $E_k^j$ is also $2$-degenerate w.r.t. $p$. This extra $p$-degeneracy is common to all symmetric molecules, and vanishes in asymmetric molecules.

%It is useful to introduce a change of basis in $L^2({\rm SO}(3))$, defined on the Wigner $D$-basis as
%\begin{equation}\label{eq:wangfunctions}
%S_{0,m,0}^j:=D_{0,m}^j,\qquad 
%S_{k,m,p}^j:=\dfrac{1}{\sqrt{2}}\left(D_{k,m}^j+(-1)^p D_{-k,m}^j\right), \quad k=1,\dots,j,
%\end{equation}
%for $j\in \mathbb{N}$, $m=-j,\dots,j$, and $p=0,1$. The $S^j_{k,m,p}$ are called the Wang functions
% Since $E_k^j=E_{-k}^j$, the functions $S_{k,m,p}^j$ still form an orthogonal basis of eigenfunctions of the rotational Hamiltonian of a symmetric molecule.
% For symmetric molecules, e.g. oblate, one has the characterizing commutation properties
% $$[H(A,A,C),J^2]=[H(A,A,C),J_3]=[H(A,A,C),P_c]=0.$$
%When $A> B> C> 0$, the asymmetric rotational Hamiltonian $H(A,B,C)$ shares the first two symmetries with the symmetric one, while the third symmetry is broken
%$$[H(A,B,C),J^2]=[H(A,B,C),J_3]=0,\quad [H(A,B,C),P_c]\neq0, \quad \text{if } A> B> C> 0 .$$
When $A> B> C> 0$, the $S^j_{k,m,p}$ are no longer eigenfunctions of $H(A,B,C)$, but one can still express its eigenfunctions as linear combinations of the $S^j_{k,m,p}$. More precisely, known commutation properties of $H$ allow to write any eigenfunction as \cite[Chapter 7.2]{gordy}
\begin{equation}\label{eigenfunctions_bis}
\Psi^j_{\tau,m}(A,B,C)=\sum_{\substack{ k=0,\dots, j\\ k \text{ only even or only odd}\\ p=0 \text{ only or } p=1 \text{ only}}}z^{j,\tau}_{k,m,p}(A,B,C) S^j_{k,m,p},
\end{equation}
for some coefficients $z^{j,\tau}_{k,m,p}(A,B,C)\in\mathbb{C}$. That is, the numbers $j$,$m$,$p$ and the parity of $k$ are still well-defined.  %Due to the commutations $[H(A,B,C),J^2]=[H(A,B,C),J_3]=0$, the eigenfunctions of $H(A,B,C)$ must be eigenfunctions of $J^2$ and $J_3$ too, and hence they are linear combinations in $k$ of the same $j$ and $m$. 
% This fact is usually understood by saying that $j$ and $m$ are still good quantum numbers for an asymmetric molecule, while $k$ is not. %It means that the eigenfunctions $\{\Psi^j_{\tau,m}\mid  \;j\in\mathbb{N},m,\tau=-j,\dots,j\}$ of $H(A,B,C)$ must satisfy 
%\begin{equation}\label{eigenfunctions}
%\Psi^j_{\tau,m}(\alpha,\beta,\gamma)=\sum_{\substack{k=0,\dots,j \\ p=0,1}}c^j_{k,m,p}(\tau) S^j_{k,m,p}(\alpha,\beta,\gamma).
%\end{equation}
%for some coefficients $c^j_{k,m,p}(\tau)\in\mathbb{C}$.
%Other useful symmetries of these eigenfunctions are known, which allow to conclude that the parity of $k$ and the number $p$ are still well-defined (see, e.g., \cite[Chapter 7.2]{gordy}).
%\begin{itemize}
%\item for any $\Psi^j_{\tau,m}$, the sum in \eqref{eigenfunctions} involves even or odd $k$ only; 
%\item for any $\Psi^j_{\tau,m}$, the sum in \eqref{eigenfunctions} involves $p=0$ or $p=1$ only.
%\end{itemize}
Moreover, it is also known that the eigenvalues of $H(A,B,C)$ are still $(2j+1)$-degenerate w.r.t $m$, while the $p$-degeneracy is lifted (see, e.g., \cite[Chapter 7]{gordy}): the spectrum of $H(A,B,C)$ is the set of eigenvalues $\{E^j_\tau(A,B,C) \mid  j\in\mathbb{N}, \tau=-j,\dots,j\}$, where
$$H(A,B,C)\Psi^j_{\tau,m}(A,B,C)=E^j_\tau(A,B,C) \Psi^j_{\tau,m}(A,B,C)\,,\quad m=-j,\dots,j,$$
and each $E^j_\tau(A,B,C)$ is degenerate, corresponding (generically) to the $(2j+1)$-dimensional eigenspace spanned by $\{\Psi^j_{\tau,m}(A,B,C)\mid  m=-j,\dots,j\}$.
\begin{remark}
 The knowledge of the coefficients $z^{j,\tau}_{k,m,p}$ and the eigenvalues $E^j_\tau$ is useful for controllability: in \cite{monika_eugenio}, they are obtained by diagonalizing the truncations of $H$, and controllability results on finite-dimensional approximations of asymmetric chiral molecules are obtained in this way; in the present paper, in order to treat the full rotational spectrum, we shall obtain informations on them in a perturbative way.
 \end{remark}

%One of the main difficulties in studying asymmetric top molecules (w.r.t. symmetric ones) is that there is no explicit closed expression for the coefficients $c^{j,\tau}_{k,m,p}(A,B,C)$ and the eigenvalues $E^j_\tau(A,B,C)$ (it is not a solvable model). Expressions for $c^{j,\tau}_{k,m,p}$ and $E^j_\tau$ are often obtained in two ways: (i) by diagonalizing $H$ w.r.t. the basis $S^j_{k,m,p}$, for fixed $j$ \cite[Chapter 7.2]{gordy} and (ii) by considering the asymmetric top Hamiltonian $H$ as an analytic perturbation of an associated symmetric top Hamiltonian \cite[Chapter 7.3]{gordy}. In this paper we follow the approach (ii). For controllability results on rotating asymmetric molecules obtained with the approach (i) one can see, e.g., \cite{monika_eugenio}.

% We also remark that, of course, a formula analogous to \eqref{eigenfunctions_bis} holds w.r.t. the $D^j_{k,m}$ representation, that is,
%\begin{equation}\label{eigenfunctions_tris}
%\Psi^j_{\tau,m}(\alpha,\beta,\gamma)=\sum_{\substack{0\leq k\leq j,\\ k \text{ only even or only odd}}}c^j_{k}(\tau) D^j_{k,m}(\alpha,\beta,\gamma).
%\end{equation}

\subsection{Noncontrollable dipole configurations of quantum asymmetric tops}
Thanks to the structure \eqref{eigenfunctions_bis} of the eigenfunctions of the asymmetric-top, we can point out the existence of invariant subspaces for \eqref{schro}, when the dipole is parallel to any of the principal axes of inertia. We introduce the Hilbert subspaces
 \begin{align*}
\mathcal{K}_{e(o)}&:=\overline{\mathrm{span}}\{S_{k,m,p}^j \mid  j\in \mathbb{N},\;m=-j,\dots,j,\;k=0,\dots,j,\;p=0,1,\\ & k \text{ even(odd)}\}, \\ 
 \mathcal{G}_{e(o)}&:=\overline{\mathrm{span}}\{S_{k,m,p}^j \mid  j\in \mathbb{N},\;m=-j,\dots,j,\;k=0,\dots,j,\;p=0,1,\\ & j+p \text{ even(odd)}\}, \\
 \mathcal{L}_{e(o)}&:=\overline{\mathrm{span}}\{S_{k,m,p}^j \mid  j\in \mathbb{N},\;m=-j,\dots,j,\;k=0,\dots,j,\;p=0,1,\\ &j+k+p \text{ even(odd)}\},
 \end{align*}
which in particular give three different orthogonal decompositions of the ambient space $$L^2({\rm SO}(3))=\mathcal{K}_{e}\oplus \mathcal{K}_{o}=\mathcal{G}_{e}\oplus \mathcal{G}_{o}=\mathcal{L}_{e}\oplus \mathcal{L}_{o}. $$
\begin{theorem}\label{thm:symmetries-asymmtop}
Let $A\geq B \geq C> 0$. \begin{enumerate}
\item[(i)]If $\delta=(0,0,\delta_c)^T$, then the spaces $\mathcal{K}_{e}$ and $\mathcal{K}_{o}$ are invariant for the propagators of \eqref{schro}. In particular, \eqref{schro} is not controllable.
\item[(ii)]If $\delta=(0,\delta_b,0)^T$, then the spaces $\mathcal{L}_{e}$ and $\mathcal{L}_{o}$ are invariant for the propagators of \eqref{schro}. In particular, \eqref{schro} is not controllable.
\item[(iii)]If $\delta=(\delta_a,0,0)^T$, then the spaces $\mathcal{G}_{e}$ and $\mathcal{G}_{o}$ are invariant for the propagators of \eqref{schro}. In particular, \eqref{schro} is not controllable.
\end{enumerate}
\end{theorem}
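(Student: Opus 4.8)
Proof proposal. The plan is to produce, for each of the three principal axes $X\in\{a,b,c\}$, a unitary involution $\mathcal R_X$ of $L^2({\rm SO}(3))$ that commutes with the full controlled Hamiltonian $H+\sum_{l}u_lH_l$ for every value of $u$, and whose two eigenspaces are exactly the pair of subspaces claimed in the corresponding item. Such an operator commutes with every propagator $\Gamma_T^u$ of \eqref{schro} (it commutes with each generator, hence with each $e^{-it(H+\sum_lu_lH_l)}$, hence with their compositions), so each of its eigenspaces is invariant under the flow; since each is a nonzero closed proper subspace, a state lying in one of them can never be steered arbitrarily close to a unit vector in the other, which rules out approximate controllability and, a fortiori, controllability.

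For $\mathcal R_X$ I would take the right translation by the rotation of angle $\pi$ about the principal axis $X$, namely $(\mathcal R_X\psi)(R)=\psi(R\,R_X(\pi))$; with the Euler convention of \eqref{rotation}, in which the body $c$-axis is the $\gamma$-axis, $\mathcal R_c$ is simply $\psi(\alpha,\beta,\gamma)\mapsto\psi(\alpha,\beta,\gamma+\pi)$, while $\mathcal R_a,\mathcal R_b$ are the analogous right $\pi$-rotations about the other two axes. First I would check that each $\mathcal R_X$ commutes with the drift $H=AP_a^2+BP_b^2+CP_c^2$ for all $A\ge B\ge C>0$: conjugation by $\mathcal R_X$ fixes $P_X$ and negates the other two body angular momenta, hence leaves $P_a^2,P_b^2,P_c^2$ (and thus $H$) unchanged, something readily verified on the basis $\{D^j_{k,m}\}$. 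The key computation is then the action of $\mathcal R_X$ on the Wang functions \eqref{eq:wangfunctions}: using the transformation laws $D^j_{k,m}(\,\cdot\,R_c(\pi))=(-1)^kD^j_{k,m}$, $D^j_{k,m}(\,\cdot\,R_b(\pi))=(-1)^{j+k}D^j_{-k,m}$ and $D^j_{k,m}(\,\cdot\,R_a(\pi))=(-1)^{j}D^j_{-k,m}$ together with the $\pm k$-symmetrization \eqref{eq:wangfunctions}, one gets
\begin{align*}
\mathcal R_cS^j_{k,m,p}&=(-1)^kS^j_{k,m,p},\\
\mathcal R_bS^j_{k,m,p}&=(-1)^{j+k+p}S^j_{k,m,p},\\
\mathcal R_aS^j_{k,m,p}&=(-1)^{j+p}S^j_{k,m,p}.
\end{align*}
Hence $\mathcal K_{e},\mathcal K_{o}$ are the $(+1)$- and $(-1)$-eigenspaces of $\mathcal R_c$, $\mathcal L_{e},\mathcal L_{o}$ those of $\mathcal R_b$, and $\mathcal G_{e},\mathcal G_{o}$ those of $\mathcal R_a$; in particular, since $\mathcal R_X$ commutes with $H$, the operator $H$ leaves all six subspaces invariant (this also follows from \eqref{eigenfunctions_bis}, which says $j$, $p$ and the parity of $k$ are good quantum numbers for $H(A,B,C)$).

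It remains to pair each dipole configuration with the right involution. If $\delta$ is parallel to the principal axis $X$, then $R_X(\pi)\delta=\delta$, and therefore for every $i=1,2,3$ and every $\psi$
\[(\mathcal R_XH_i\mathcal R_X^{-1}\psi)(R)=-\langle R\,R_X(\pi)\delta,e_i\rangle\,\psi(R)=-\langle R\delta,e_i\rangle\,\psi(R)=(H_i\psi)(R),\]
so $\mathcal R_X$ commutes with every control operator as well. Combined with the previous paragraph, $\mathcal R_X$ commutes with $H+\sum_lu_lH_l$ for all $u\in U$. Taking $X=c$ (so $\delta=(0,0,\delta_c)^T$) gives the invariance of $\mathcal K_{e},\mathcal K_{o}$ of item (i); $X=b$ gives $\mathcal L_{e},\mathcal L_{o}$ of item (ii); $X=a$ gives $\mathcal G_{e},\mathcal G_{o}$ of item (iii). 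Each of these six subspaces is nonzero and proper (it contains some, and misses some, of the $S^j_{k,m,p}$), so its invariance under all propagators forbids approximate controllability.

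The step I expect to be the real work is the phase bookkeeping in the second paragraph: obtaining the factors $(-1)^{j+k}$ and $(-1)^{j}$ in the right $\pi$-rotations about the axes that are not the $\gamma$-axis requires the reflection identities for the reduced Wigner functions (for instance $d^j_{k,m}(\pi)=(-1)^{j-m}\delta_{k,-m}$ and the behaviour of $d^j_{k,m}$ under $\beta\mapsto\pi-\beta$), followed by a careful repackaging that tracks how the $k\mapsto-k$ flip interacts with the $\pm k$ symmetrization defining $S^j_{k,m,p}$; one must also fix once and for all which body axis carries which index in the Euler parametrization so that the correspondences $\mathcal R_c\leftrightarrow\mathcal K$, $\mathcal R_b\leftrightarrow\mathcal L$, $\mathcal R_a\leftrightarrow\mathcal G$ come out exactly as in the statement. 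Once these signs are settled, the rest is immediate.
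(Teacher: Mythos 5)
Your route is genuinely different from the paper's, and its skeleton is sound. The paper proves the theorem by brute selection rules: it writes the control Hamiltonians in the Wang basis and checks, transition by transition via the tables \eqref{kk+1}--\eqref{jjkk}, that when $\delta$ lies along a principal axis every nonzero matrix element conserves the relevant parity ($k$, $j+k+p$, or $j+p$), invariance of the drift being supplied by \eqref{eigenfunctions_bis}. You instead exhibit the three unitary involutions $\mathcal R_X\psi(R)=\psi(R\,R_X(\pi))$ (the Four-group of the rotor acting by right translations), note that each commutes with $H=AP_a^2+BP_b^2+CP_c^2$ for every $A\geq B\geq C>0$ and with all three $H_i$ precisely because $R_X(\pi)\delta=\delta$ when $\delta\parallel X$, and read off the invariant subspaces as eigenspaces. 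The commutation arguments and the passage from a proper closed invariant subspace to failure of (approximate) controllability are correct and convention-free; conceptually this explains the paper's selection rules rather than verifying them, needs no information on the asymmetric-top eigenfunctions, and in fact already yields noncontrollability in all three dipole configurations independently of how the eigenspaces are labelled, since each $\mathcal R_X$ is a nontrivial involution.

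The genuine gap is the step you yourself flag: the transformation laws $D^j_{k,m}(\cdot\,R_b(\pi))=(-1)^{j+k}D^j_{-k,m}$ and $D^j_{k,m}(\cdot\,R_a(\pi))=(-1)^{j}D^j_{-k,m}$ are asserted, not derived, and they are exactly where the content of items (ii)--(iii) sits, namely which axis pairs with $\mathcal L_{e/o}$ and which with $\mathcal G_{e/o}$ (only the $\mathcal R_c$ eigenvalue $(-1)^k$ is convention-free). These phases depend both on the convention hidden in \eqref{explicit} (the paper's $D^j_{k,m}=e^{\ii(k\gamma+m\alpha)}d^j_{k,m}$ is not the Varshalovich normalization: e.g.\ \eqref{jjkk} forces $\langle D^j_{k,m},\cos\beta\,D^j_{k,m}\rangle=-km/j(j+1)$, opposite in sign to the usual one) and on the identification of body axes with Euler indices, which in the paper's oblate convention is $P_1=P_b$, $P_2=P_a$ (note the swap). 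If one plugs in off-the-shelf Condon--Shortley/Varshalovich phases, where the $\pi$-rotation about axis $1$ carries $(-1)^j$ and about axis $2$ carries $(-1)^{j+k}$, one obtains $\mathcal R_bS^j_{k,m,p}=(-1)^{j+p}S^j_{k,m,p}$ and $\mathcal R_aS^j_{k,m,p}=(-1)^{j+k+p}S^j_{k,m,p}$, i.e.\ the opposite pairing to the one you state; so the bookkeeping cannot be waved through, it must be anchored to the paper's (Gordy--Cook) conventions, for instance by rederiving the reflection identities from the very tables \eqref{kk+1}--\eqref{jjkk} the paper uses, or by checking the $j=1$ case explicitly against \eqref{rotation} and \eqref{eq:wangfunctions}. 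Once that single computation is pinned down in the paper's conventions, your proof is complete and arguably cleaner than the paper's.
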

\begin{proof}
We adopt the oblate convention, so that $(\delta_1,\delta_2,\delta_3)^T=(\delta_b,\delta_a,\delta_c)^T$. To prove (i), we claim that
\begin{equation}\label{selection1}
 \delta=(0,0,\delta_3)^T\;\;\Rightarrow\;\;\langle  S^j_{k,m,p},H_i S^{j'}_{k',m',p'}\rangle =0\quad \text{ if }k\neq k',\;\;i=1,2,3.
 \end{equation}
 This is easily seen by noticing that the expressions of the control operators $H_1,H_2,H_3$ do not depend on the angle $\gamma$ if $\delta_1=\delta_2=0$: indeed, using \eqref{rotation}, $H_1=-(\cos\alpha\sin\beta)\delta_3$, $H_2=-(\sin\alpha\sin\beta)\delta_3$, and $H_3=-\cos\beta\delta_3$. Thus, 
 \begin{align*}
 &\langle  D_{k,m}^j, H_i  D_{k',m'}^{j'}\rangle _{L^2({\rm SO}(3))}\\ 
 =&\langle e^{\ii k \gamma},e^{\ii k'\gamma}\rangle_{L^2(S^1)}\langle e^{\ii m\alpha}d^j_{k,m}(\beta),H_i(\alpha,\beta)e^{\ii m'\alpha}d^{j'}_{k',m'}(\beta)\rangle_{L^2(S^2)}=0 
 \end{align*}

%\begin{align*}
%\langle  D_{k,m}^j,&\ii H_1  D_{k',m'}^{j'}\rangle _{L^2({\rm SO}(3))}\\  ={}& \frac{1}{8}\int_0^{2\pi}d\alpha\int_0^{2\pi}d\gamma\int_0^{\pi}d\beta \sin(\beta) D_{k,m}^j(\alpha,\beta,\gamma)\ii H_1(\alpha,\beta,\gamma)\overline{D_{k',m'}^{j'}}(\alpha,\beta,\gamma)  \\
%={}&\frac{\ii}{8} \delta_3\left(\int_0^{2\pi}d\gamma e^{\ii k\gamma}e^{-\ii k'\gamma}\right)\left(\int_0^{2\pi}d\alpha\cos(\alpha) e^{\ii m\alpha}e^{-\ii m'\alpha}\right)
%\\ &\times\left(\int_0^{\pi}d\beta \sin^2(\beta)d_{k,m}^j(\beta)\overline{d_{k',m'}^{j'}}(\beta)\right) =0,
%\end{align*}
if $k\neq k'$, for $i=1,2,3$, using the orthogonality of the functions $e^{ik\gamma}$ and $e^{ik'\gamma}$ for $k\neq k'$. Hence, \eqref{eigenfunctions_bis} plus \eqref{selection1} imply
$$ \delta=(0,0,\delta_3)^T\;\;\Rightarrow  H_i \mathcal{K}_{e(o)}\subset \mathcal{K}_{e(o)},\;\;i=1,2,3,$$
and moreover \eqref{eigenfunctions_bis} implies that $H\mathcal{K}_{e(o)}\subset \mathcal{K}_{e(o)}$.

In order to prove (ii), we claim that 
\begin{equation}\label{selection2}
\delta=(\delta_1,0,0)^T\Rightarrow\langle  S^j_{k,m,p},H_i S^{j'}_{k',m',p'}\rangle =0 \text{ if }j+k+p\not\equiv j'+k'+p'\mathrm{mod}\;2, \;i=1,2,3.
 \end{equation}
Then, \eqref{eigenfunctions_bis} plus \eqref{selection2} imply
$$ \delta=(\delta_1,0,0)^T\;\;\Rightarrow  H_i \mathcal{L}_{e(o)}\subset \mathcal{L}_{e(o)},\;\;i=1,2,3,$$
and moreover \eqref{eigenfunctions_bis} implies $H\mathcal{L}_{e(o)}\subset \mathcal{L}_{e(o)}$. So, we are left to prove the claim \eqref{selection2}: first, we recall the selection rules \cite[Table 2.1]{gordy}
\begin{equation}\label{rules}
\langle  D_{k,m}^j , \ii H_l D_{k',m'}^{j'} \rangle =0,
\end{equation}
when $\mid j'-j\mid > 1$, or $\mid k'-k\mid >1$ or $\mid m'-m\mid > 1$, for every $l=1,2,3$.

 Moreover, the non-vanishing matrix elements for the interaction Hamiltonians in the $D^j_{k,m}$-basis are given by (see, e.g., \cite[Table 2.1]{gordy})
\begin{equation}\label{kk+1}
\begin{cases}
\langle  D_{k,m}^j, \ii H_1 D_{k+1,m\pm1}^{j+1} \rangle &=-c_{j,k,\pm m}(\delta_2+\ii\delta_1),   \\
\langle  D_{k,m}^j , \ii H_1 D_{k-1,m\pm1}^{j+1} \rangle &=c_{j,-k,\pm m}(\delta_2-\ii\delta_1),  \\
\langle  D_{k,m}^j , \ii H_2 D_{k+1,m\pm1}^{j+1} \rangle &= \mp \ii c_{j,k,\pm m} (\delta_2+\ii\delta_1),   \\
\langle  D_{k,m}^j , \ii H_2 D_{k-1,m\pm1}^{j+1} \rangle &=\pm \ii c_{j,-k,\pm m}(\delta_2-\ii\delta_1),  \\
\langle  D_{k,m}^j , \ii H_3 D_{k\pm1,m}^{j+1} \rangle &=\pm \ii d_{j,\pm k,m}(\delta_2\pm\ii\delta_1),  
\end{cases}
\end{equation}
where 
\[
\begin{split}
 &c_{j,k,m}:=  \dfrac{ [(j+k+1)(j+k+2)]^{1/2}[(j+m+1)(j+m+2)]^{1/2}}{4(j+1)[(2j+1)(2j+3)]^{1/2}},\\ &
 d_{j,k,m}:=  \dfrac{ [(j+k+1)(j+k+2)]^{1/2} [(j+1)^2-m^2]^{1/2}}{2(j+1)[(2j+1)(2j+3)]^{1/2}},
 \end{split}
 \]
 then
\begin{equation}\label{jj}
\begin{cases}
\langle  D_{k,m}^j , \ii H_1D_{k+1,m\pm1}^{j} \rangle &=\mp h_{j,k,\pm m}(\delta_2+\ii \delta_1),  \\
\langle  D_{k,m}^j , \ii H_1D_{k-1,m\pm 1}^{j} \rangle &=\mp h_{j,-k,\pm m}(\delta_2-\ii \delta_1),  \\
\langle  D_{k,m}^j , \ii H_2D_{k+1,m\pm1}^{j} \rangle &= -\ii h_{j,k,\pm m}(\delta_2+\ii \delta_1),  \\
\langle  D_{k,m}^j , \ii H_2D_{k-1,m\pm1}^{j} \rangle &=-\ii h_{j,-k,\pm m}(\delta_2- \ii \delta_1),   \\
\langle  D_{k,m}^j , \ii H_3D_{k\pm1,m}^{j} \rangle &=-\ii q_{j,\pm k,m}(\delta_2 \pm \ii \delta_1), 
\end{cases}
\end{equation}
where 
\begin{align*}
 h_{j,k,m}&:=  \dfrac{[j(j+1)-k(k+1)]^{1/2}[j(j+1)-m(m+1)]^{1/2}}{4j(j+1)}, \\
q_{j,k,m}&:=  \dfrac{[j(j+1)-k(k+1)]^{1/2}m}{2j(j+1)},
\end{align*}
then
\begin{equation}\label{kk}
\begin{cases}
\langle  D_{k,m}^j , \ii H_1D_{k,m\pm 1}^{j+1} \rangle &=a_{j,k,\pm m}\delta_3,  \\
\langle  D_{k,m}^j , \ii H_2D_{k,m\pm 1}^{j+1}\rangle &=\pm \ii a_{j,k,\pm m}\delta_3,  \\
\langle  D_{k,m}^j , \ii H_3D_{k,m}^{j+1} \rangle &=-\ii b_{j,k,m}\delta_3,  
\end{cases}
\end{equation}
where
\begin{align*}
a_{j,k,m}&:= \dfrac{[(j+1)^2-k^2]^{1/2}[(j+m+1)(j+m+2)]^{1/2}}{2(j+1)[(2j+1)(2j+3)]^{1/2}},\\
b_{j,k,m}&:=  \dfrac{[(j+1)^2-k^2]^{1/2}[(j+1)^2-m^2]^{1/2}}{(j+1)[(2j+1)(2j+3)]^{1/2}},
\end{align*}
and finally
\begin{equation}\label{jjkk}
\begin{cases}
\langle  D_{k,m}^j , \ii H_1D_{k,m\pm 1}^{j} \rangle &=\pm f_{j,k,\pm m}\delta_3,  \\
\langle  D_{k,m}^j , \ii H_2D_{k,m\pm 1}^{j}\rangle &= \ii f_{j,k,\pm m}\delta_3,  \\
\langle  D_{k,m}^j , \ii H_3D_{k,m}^{j} \rangle &=\ii g_{j,k,m}\delta_3, 
\end{cases}
\end{equation}
where
\begin{align*}
f_{j,k,\pm m}&:= \dfrac{k[j(j+1)-m(m\pm 1)]^{1/2}}{4j(j+1)},\\
g_{j,k,m}&:=  \dfrac{km}{j(j+1)}.
\end{align*}
We need to prove that the pairings allowed by the control operators $H_1,H_2$ and $H_3$ conserve the parity of $j+p+k$, when $\delta=(\delta_1,0,0)^T$. To do so, let us compute
\begin{align}
\langle  S_{k,m,p}^j,\ii H_1 S_{k+1,m+1,p}^{j+1}\rangle &=-c_{j,k,m}(\ii\delta_1)+c_{j,k,m}(-\ii\delta_1)\nonumber\\ 
&=-2\ii c_{j,k,m}\delta_1,\label{pairingwang1} \\ 
\langle  S_{k,m,p}^j,\ii H_1 S_{k+1,m+1,p'}^{j+1}\rangle &=-c_{j,k,m}(\ii\delta_1)-c_{j,k,m}(-\ii\delta_1)\nonumber\\
 &=0, \qquad p\neq p',  \nonumber
\end{align}
having used \eqref{eq:wangfunctions},\eqref{kk+1} and the fact that $\delta_2=0$. Then we also have 
\begin{equation}\label{pairingwang2}
\begin{cases}
\langle  S_{k,m,p}^j,\ii H_1 S_{k+1,m+1,p}^{j}\rangle =0, & \\
\langle  S_{k,m,p}^j,\ii H_1 S_{k+1,m+1,p'}^{j}\rangle =-2\ii h_{j,k,m} \delta_1, \quad p\neq p', &
\end{cases}
\end{equation}
having used \eqref{eq:wangfunctions}, \eqref{jj} and the fact that $\delta_2=0$.
% From (\ref{pairingwang1}) and (\ref{pairingwang2}) we can see that the allowed transitions only depend on the parity of $j+p$ and $k$; indeed, we have either transitions between states of the form
%\[
%\begin{cases}
%j+p & \text{even} \\
%k & \text{even} \\
%\end{cases}\longleftrightarrow \begin{cases}
%j'+p' & \text{odd} \\
%k' & \text{odd}, 
%\end{cases}
%\]
%or transitions between states of the form 
%\[
%\begin{cases}
%j+p & \text{even} \\
%k & \text{odd} \\
%\end{cases}\longleftrightarrow \begin{cases}
%j'+p' & \text{odd} \\
%k' & \text{even}. 
%\end{cases}
%\]
The same happens if we replace $m+1$ with $m-1$ and $k+1$ with $k-1$ in \eqref{pairingwang1} and \eqref{pairingwang2}. Because of the selection rules (\ref{rules}), these are the only transitions allowed by the operator $H_1$, and we have thus proved that they conserve the parity of $j+p+k$.
In the same way, one easily checks that every transition induced by $H_2,H_3$ also conserves the parity of $j+p+k$, when $\delta=(\delta_1,0,0)^T$. 

The proof of (iii) is completely analogous to the one of (ii): \eqref{eigenfunctions_bis} implies that $H\mathcal{G}_{e(o)}\subset \mathcal{G}_{e(o)}$ and the claim
 \begin{equation}\label{selection3}
\delta=(0,\delta_2,0)^T\;\Rightarrow\;\langle  S^j_{k,m,p},H_i S^{j'}_{k',m',p'}\rangle =0\quad \text{ if }j+p\not\equiv j'+p'\;\; \mathrm{mod}\;2, \;i=1,2,3,
 \end{equation}
which follows by inspection of the allowed pairings \eqref{kk+1} and \eqref{jj} when $\delta=(0,\delta_2,0)$ as in the proof of claim \eqref{selection2}, plus \eqref{eigenfunctions_bis} imply
$$ \delta=(0,\delta_2,0)^T\;\;\Rightarrow  H_i \mathcal{G}_{e(o)}\subset \mathcal{G}_{e(o)},\;\;i=1,2,3.$$
\end{proof}

\subsection{Controllability of quantum asymmetric tops}
In this section we prove that, except for the dipole configurations (i),(ii), and (iii) of Theorem \ref{thm:symmetries-asymmtop}, system \eqref{schro} is almost always approximately controllable:
\begin{theorem}\label{thm:asymmtopcontrol}
If $\delta \neq (\delta_a,0,0)^T,(0,\delta_b,0)^T,(0,0,\delta_c)^T$, \eqref{schro} is approximately controllable for almost every $A> B> C> 0$.
\end{theorem}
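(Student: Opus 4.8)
The plan is to apply the controllability criterion of Theorem \ref{LGTC} to \eqref{schro}, using a perturbative argument anchored at one of the two symmetric-top limits (oblate $A=B$ or prolate $B=C$). First I would set up the right finite-dimensional building blocks: for each $j$, take $I_j$ to be the index set of the $(2j+1)^2$-dimensional block spanned by the asymmetric-top eigenfunctions $\Psi^j_{\tau,m}(A,B,C)$ with fixed $j$ and all $\tau,m$, together with possibly adjacent blocks needed to make the graph $\mathcal{G}$ connected and to realize enough excitations — exactly as in the symmetric-top treatment of \cite{Ugo-Mario-Io-symmetrictop}. The key point is that at the symmetric limit (say oblate, $A=B$), the hypothesis $\delta\neq(0,0,\delta_c)^T$ means $\delta$ is not parallel to the symmetry axis $c$, so by the results quoted from \cite{Ugo-Mario-Io-symmetrictop} the symmetric top is approximately controllable; in particular, for the symmetric system the Lie-algebraic condition $\mathfrak{su}(n_j)\subset\mathcal{T}_j$ holds on each Galerkin block and $\mathcal{G}$ is connected.

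Next I would argue that the quantities entering Theorem \ref{LGTC} depend analytically on the rotational constants. Concretely: the eigenvalues $E^j_\tau(A,B,C)$ and the eigenprojectors (hence the coefficients $z^{j,\tau}_{k,m,p}$) depend analytically on $(A,B,C)$ near a point where the relevant $j$-block has its generic eigenvalue structure, since $H(A,B,C)$ restricted to the (finite-dimensional, $H$-invariant) sector of fixed $j$, parity of $k$, and $p$ is an analytic family of Hermitian matrices; the only degeneracies forced by symmetry are the $m$-degeneracies, which are exactly the ones the criterion is designed to handle. The projected matrices $H^{(j)}_i$ and the excitation matrices $\mathcal{E}_\sigma(H^{(j)}_i)$ are then analytic in $(A,B,C)$ wherever the pattern of spectral gaps $\Sigma_j$ is locally constant, i.e. outside a locally finite union of analytic hypersurfaces where accidental degeneracies or gap coincidences occur. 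On the complement, $\dim\mathrm{Lie}(\nu^1_j)$ and the dimension of the ideal $\mathcal{T}_j$ are lower semicontinuous and locally constant away from a further analytic hypersurface, so the set of $(A,B,C)$ where $\mathfrak{su}(n_j)\subset\mathcal{T}_j$ fails is contained in a measure-zero analytic set — provided it is not everything. Since it is not everything (it excludes the symmetric limit), for each $j$ this bad set $\mathcal{N}_j$ has Lebesgue measure zero; the countable union $\bigcup_j\mathcal{N}_j$ together with the (also null) set where $\mathcal{G}$ fails to be connected is still null, and on its complement Theorem \ref{LGTC} gives approximate controllability, which is the claim.

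The main obstacle, and the step that needs real care, is that the symmetric-top limit $A=B$ (or $B=C$) is precisely a point where the spectrum is more degenerate (the $p$-degeneracy $E^j_k=E^j_{-k}$ reappears), so it is not a ``generic'' point of the analytic family: one cannot simply say ``controllability is open and dense, and holds at the symmetric point.'' Rather, the structure of $\mathcal{T}_j$ may jump at $A=B$. The fix is to show that the relevant algebraic identities persist in a punctured neighbourhood: either (a) the symmetric-top analysis in \cite{Ugo-Mario-Io-symmetrictop} already realizes all needed $\mathfrak{su}$ directions using excitations $(\sigma,i)$ whose membership in $\Xi^0_j$ or $\Xi^1_j$ is \emph{stable} under the perturbation that splits the $p$-degeneracy (so the brackets survive the splitting), or (b) one regards the whole construction on a slightly larger Galerkin block on which the limiting degeneracy is harmless, and invokes analyticity of $\dim\mathcal{T}_j$ on the parameter segment joining the asymmetric point to the symmetric one, concluding that $\dim\mathcal{T}_j$ is maximal off an analytic hypersurface that does not contain the whole segment. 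I would also need to verify that the new, genuinely asymmetric couplings (matrix elements that vanish at $A=B$ but not for $A>B$) do not destroy the invariant-subspace-free structure — but since adding generators can only enlarge $\mathrm{Lie}(\nu^1_j)$ and the ideal $\mathcal{T}_j$, this direction is safe. Finally I would note that the same argument run from the prolate limit covers the configurations with $\delta$ not parallel to $a$, and between the two limits every $\delta\notin\{(\delta_a,0,0)^T,(0,\delta_b,0)^T,(0,0,\delta_c)^T\}$ is handled.
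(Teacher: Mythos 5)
Your overall strategy is the same as the paper's (apply Theorem \ref{LGTC}, anchor at a symmetric-top limit, and propagate by analyticity in an asymmetry parameter to get ``almost every''), but there is a factual error at the anchoring step and a genuine unresolved gap at the step you yourself flag as the main obstacle. First, the anchoring: you claim that $\delta$ not parallel to the symmetry axis of the limiting symmetric top already gives approximate controllability of that limit. This is false: a symmetric top whose dipole is \emph{orthogonal} to the symmetry axis is not approximately controllable (it has an invariant subspace, cf.\ Theorem \ref{thm:symmetries-asymmtop} and the discussion of \cite{Ugo-Mario-Io-symmetrictop} in the introduction); the correct criterion is ``neither parallel nor orthogonal''. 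Hence for $\delta=(\delta_a,\delta_b,0)^T$ your oblate anchor is a non-controllable symmetric top and the step ``the bad set is not everything because it excludes the symmetric limit'' breaks down there. The paper's case split is designed around exactly this: the oblate limit $H\bigl(\frac12(A+B),\frac12(A+B),C\bigr)$ is used when $\delta\neq(\delta_a,\delta_b,0)^T,(0,0,\delta_c)^T$, and the prolate limit is used precisely for $\delta=(\delta_a,\delta_b,0)^T$ with $\delta_a,\delta_b\neq0$. Your two limits do cover all admissible dipoles, but only once the criterion is stated correctly, so this must be repaired, not merely remarked.

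The deeper gap is that the difficulty you identify (the symmetric limit is a degenerate, non-generic point: the $p$-degeneracy $E^j_k=E^j_{-k}$ reappears, so the gap pattern and hence $\Xi^0_j$, $\Xi^1_j$, $\mathcal{T}_j$ change discontinuously at the limit) is left unresolved, and neither of your proposed fixes works as stated. Tracking ``$\dim\mathcal{T}_j$ analytically along the segment'' is not meaningful, since $\mathcal{T}_j$ is defined through exact resonance relations, which are not analytic data and jump at the limit; and ``stability of membership in $\Xi^0_j/\Xi^1_j$'' misses that a single symmetric-top gap splits into two distinct gaps for $\mu_{\rm o}\neq0$, so the symmetric-top excitation operator is not $\mathcal{E}_\sigma$ of any single asymmetric-top gap. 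The paper's mechanism is concrete: write $H(A,B,C)=H(0)+\mu_{\rm o}V_{\rm o}$ with the Wang parameter \eqref{eq:wang}, work on $\mathcal{M}_j=\mathcal{H}_j\oplus\mathcal{H}_{j+1}$, and choose the Wang basis, which diagonalizes $V_{\rm o}$ on each degenerate eigenspace, so that eigenvalues and eigenfunctions $S^j_{k,m,p}(\mu_{\rm o})$ are analytic in $\mu_{\rm o}$; then use, for $\mu_{\rm o}\neq0$, the \emph{averaged} operators such as $\frac12\bigl(\mathcal{E}_{\lambda^j_{k,0}(\mu_{\rm o})}(\ii H^{(j)}_l)+\mathcal{E}_{\rho^j_{k,1}(\mu_{\rm o})}(\ii H^{(j)}_l)\bigr)$ in \eqref{decoupledoperators0}--\eqref{decoupledoperators1}, which form analytic families reducing at $\mu_{\rm o}=0$ exactly to the symmetric-top excitation operators for which $\mathfrak{su}$-generation is known; membership in $\Xi^0_j,\Xi^1_j$ and the absence of internal resonances are checked at $\mu_{\rm o}=0$ under the full-measure condition $(A+B)/C\notin\mathbb{Q}$ (which your sketch never invokes) and propagated to almost every $\mu_{\rm o}$ because a nontrivial analytic function on $[-1,0]$ has finitely many zeros, with a countable intersection of full-measure sets handling the infinitely many external resonance constraints. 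Without some version of this averaging-plus-analytic-eigenbasis construction, the passage from the symmetric limit to almost every $A>B>C>0$ does not go through.
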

\begin{proof}The proof is based on an application of Theorem \ref{LGTC}. Since the spectrum of $H$ is not explicit if $A> B> C> 0$, we have to check the conditions needed to apply Theorem \ref{LGTC} on system \eqref{schro} in a perturbative way, starting from cases in which the spectrum of $H$ is explicit (that are, $A=B$ or $B=C$).
%In what follows it will be crucial to express the asymmetric rotational Hamiltonian as a perturbation of a symmetric rotational Hamiltonian.
 We start by considering the case $\delta\neq (\delta_a,\delta_b,0)^T, (0,0,\delta_c)^T$ (that is, the dipole is not orthogonal nor parallel to the axis $c$). The orthogonal case $\delta=(\delta_a,\delta_b,0)^T$ with $\delta_a,\delta_b\neq 0$ will be treated at the end.

We then express the asymmetric Hamiltonian as a perturbation of an oblate symmetric Hamiltonian, using a single asymmetry parameter:
\begin{equation}\label{H(b)}
\begin{split}
H(A,B,C)&=AP_a^2+BP_b^2+CP_c^2\\ &=\frac{1}{2}(A+B)P^2+[C-\frac{1}{2}(A+B)]P_c^2\\ &+\Big(\frac{A-B}{2C-B-A}\Big)[C-\frac{1}{2}(A+B)](P_a^2-P_b^2)\\ &
=:H\Big(\frac{1}{2}(A+B),\frac{1}{2}(A+B),C\Big)+\mu_{\rm o}V_{\rm o}=H(\mu_{\rm o}),
\end{split}
\end{equation}
where $H\Big(\frac{1}{2}(A+B),\frac{1}{2}(A+B),C\Big)$ is the rotational Hamiltonian of an oblate symmetric top with rotational constants $\frac{1}{2}(A+B)$ and $C$, symmetry axis $c$, dipole $\delta=(\delta_a,\delta_b,\delta_c)^T$ with $\delta_a,\delta_c\neq 0$ or $\delta_b,\delta_c\neq 0$ (that is, dipole not parallel nor orthogonal to the symmetry axis $c$) and we have defined the Wang oblate asymmetry parameter \cite[Chapter 7]{gordy}
\begin{equation}\label{eq:wang}
\mu_{\rm o}:= \frac{A-B}{2C-B-A}\in[-1,0],
\end{equation}
and a perturbation operator
$$V_{\rm o}:=[C-\frac{1}{2}(A+B)](P_a^2-P_b^2). $$
%The perturbative formula \eqref{H(b)} is commonly adopted to express the eigenvalues of a slightly asymmetric-top, near to the oblate symmetric limiting top, that is, when $b\approx 0$, i.e., when $A\approx B$, as a perturbation of the oblate top energies (see, e.g., \cite[Table 7.8]{gordy}). Of course, being a Taylor expansion, the perturbative expression for the eigenvalues looses accuracy if $b$ is not small. Anyway, here we consider the whole range $[-1,0]$ of $b$, as we are not interested in a precise expression of the energies, but we are concerned with a different problem. The nature of our problem is the following: if the spectral gaps are not equal at $b=0$, by analyticity, they will remain different for almost every value of $b$. Hence, from a non-resonant condition on the spectrum of symmetric tops (i.e., at $b=0$), we will obtain a non-resonant condition on the spectrum of almost all asymmetric tops (i.e., for almost every $b\in[-1,0]$).
We define for every $j \in \mathbb{N}$ the set $I_j:=\{\rho(l,\tau,m)\mid  l=j,j+1, \; \tau,m=-l,\dots,l\}\subset \mathbb{N}$, where $\rho: \{(l,\tau,m) \mid  l\in \mathbb{N},\tau,m=-l,\dots,l\}\rightarrow \mathbb{N}$ is the lexicographic ordering.
The graph $\mathcal{G}$ whose vertices are the sets $I_j$ and whose edges are 
$\{(I_j,I_{j'})\mid  I_j\cap I_{j'} \neq \emptyset\}=\{ (I_j,I_{j+1})\mid  j\in\mathbb{N}\}$ is connected (it is indeed linear). Theorefore, for every $j\in \mathbb{N}$, we consider 
$$\mathcal{M}_j:=\mathcal{H}_j \oplus \mathcal{H}_{j+1},\quad \mathcal{H}_l:= \mathrm{span}\{\Psi_{\tau,m}^l(A,B,C) \mid  \tau,m=-l,\dots,l\},$$
where with a slight abuse of notation we dropped the dependence on $A,B,C$ of the vector spaces $\mathcal{H}_l$ and $\mathcal{M}_j$. The dimension of $\mathcal{M}_j$ is $(2j+1)^2+(2(j+1)+1)^2$, and we identify $\mathfrak{su}(\mathcal{M}_j)$ with $\mathfrak{su}((2j+1)^2+(2(j+1)+1)^2)$. 
We notice that, in addition to the fact that $\mathcal{H}_j$ is by definition invariant under the action of $H(\mu_{\rm o})$, we furthermore have that $\mathcal{H}_j$ is invariant under the actions of $H(0)$ and $V_{\rm o}$, since
\begin{eqnarray}
\langle  D^j_{k,m},H(0)D^{j'}_{k',m'}\rangle &=&0,\quad \text{if } (j,k,m)\neq (j',k',m'), \label{eq:unperturbed_drift}\\
\langle  D^j_{k,m},V_{\rm o}D^{j'}_{k',m'}\rangle &=&0,\quad \text{if } j\neq j',\label{eq:perturbation_rules}
\end{eqnarray}
having used that $\{D^j_{k,m}\}_{j,k,m}$ is a set of orthonormal eigenfunctions for $H(0)$ in \eqref{eq:unperturbed_drift} and \cite[Table 7.2]{gordy} in \eqref{eq:perturbation_rules}.
 We can thus identify $H(\mu_{\rm o})=H(0)+\mu_{\rm o}V_{\rm o}$ with its matrix representation acting on $\mathcal{M}_j=\mathcal{H}_j\oplus\mathcal{H}_{j+1}$: thanks to \eqref{eq:unperturbed_drift} and \eqref{eq:perturbation_rules} the eigenpairs of $H(\mu_{\rm o})$ acting on $L^2({\rm SO}(3))$ and of $H(\mu_{\rm o})$ acting on $\mathcal{M}_j$ are the same, and are thus analytic w.r.t. $\mu_{\rm o}\in[-1,0]$, since $\mathcal{M}_j$ is finite-dimensional. Set 
\[
\begin{split}
&E_{k,0}^j\,:=\,E_{k}^j\;, \quad k=0,\dots,j,\\ &
E_{k,1}^j\,:=\,E_{-k}^j\;, \quad k=1,\dots,j,
\end{split}
\]
being $E^j_k=E^j_k\left(\frac{1}{2}(A+B),C\right)$ defined in \eqref{spectrum}, and let $E^j_{k,0}=E^j_{k,1}$, $k> 0$, be a degenerate eigenvalue of $H(0)$ with two distinct eigenfunctions $S^j_{k,m,0}$ and $S^j_{k,m,1}$, $m=-j,\dots,j$, $k=1,\dots,j$. We can then consider the eigenvalue $E_{k,p}^j(\mu_{\rm o})$ of $H(\mu_{\rm o})$ which converges to $E_{k,p}^j$ as $\mu_{\rm o}$ tends to $0$, for $p=0,1$. It is well known \cite[Chapter 7]{gordy} that $E_{k,p}^j(\mu_{\rm o})$ is still $(2j+1)$-degenerate w.r.t. $m$, but the 2-fold $p$-degeneracy is broken: $E_{k,0}^j(\mu_{\rm o})\neq E_{k,1}^j(\mu_{\rm o})$ if $\mu_{\rm o}\neq 0$.  Moreover, as we have already remarked, the function $[-1,0]\ni \mu_{\rm o}\mapsto E_{k,p}^j(\mu_{\rm o})\in\mathbb{R}$ is analytic.\\

Since for $\mu_{\rm o}=0$ the $p$-degeneracy appears, we need to choose the basis of the eigenspace $\mathcal{E}^j_k:=\mathrm{span}\{D^j_{k,m},D_{-k,m}^j\mid  m=-j,\dots,j\}$ (corresponding to the unperturbed degenerate eigenvalue $E^j_k$)
%in order to expand the eigenbasis with respect to $\mu_{\rm o}$. Here, the good basis of the eigenspace means the basis to which the expanded eigenbasis converges as the perturbation goes to zero. This is equivalent to ask that 
in which the perturbation $V_{\rm o}$ is diagonal \cite[Chapter 5]{sakurai}. Since \cite[Table 7.2]{gordy}
 \[
  \langle  D_{k,m}^j,V_{\rm o} D^j_{\pm k,m'}\rangle =0, \quad \text{ if } m'\neq m,
 \]
 and
 \begin{equation}\label{degenerateperturbation}
 \langle  D_{k,m}^j,V_{\rm o} D^j_{-k,m}\rangle =\begin{cases}
 0 , & \text{ if } k\neq \pm1,\\
 [C-\frac{1}{2}(A+B)]\frac{j(j+1)}{2}, & \text{ if } k= \pm1,
 \end{cases}
 \end{equation}
we see that the perturbation is not diagonal in the Wigner $D$-basis $\{D^j_{k,m},D_{-k,m}^j\mid  m=-j,\dots,j\}$, but it is diagonal in the Wang basis $\{S^j_{k,m,0},S^j_{k,m,1}\mid  m=-j,\dots j\}$, as a basis of $\mathcal{E}_k^j$, for every $k=0,\dots,j$. In other words, to each asymmetric top eigenfunctions $\Psi^j_{\tau,m}$ is attached one and only one perturbed symmetric top eigenfunction $S^j_{k,m,p}(\mu_{\rm o})$ and we can thus write
 $$\mathcal{H}_l=\mathrm{span}\{S^l_{k,m,p}(\mu_{\rm o})\mid  m=-j,\dots,j,\;k=0,\dots,j,\;p=0,1\}, $$
 for $\mu_{\rm o}\in[-1,0]$. Moreover, as we have already remarked, the function $[-1,0]\ni \mu_{\rm o}\mapsto S^j_{k,m,p}(\mu_{\rm o})\in L^2(SO(3))$ is analytic and $S^j_{k,m,p}(\mu_{\rm o})\rightarrow S^j_{k,m,p}$ as $\mu_{\rm o}\rightarrow 0$. %Notice that the fact that asymmetric top eigenfunctions tend to Wang functions as the asymmetry parameter goes to $0$ is well-known in physics and chemistry \cite[Chapter 7]{gordy}. 
 We then express the matrices $H_i^{(j)}:=\Pi_{\mathcal{M}_j}H_i\Pi_{\mathcal{M}_j}$ of the control problem projected on $\mathcal{M}_j$ in this basis that depends on the asymmetry parameter. \\
% \begin{remark}
% From \eqref{degenerateperturbation} we can also evidence that the $k$-degeneracy for $k=1$ is lifted even at first order, since:
% \[
% \dfrac{d}{d\mu_{\rm o}}\mid_{\mu_{\rm o}=0}E_{1,p}^j(\mu_{\rm o})=\langle  S^j_{1,m,p},V_{\rm o} S^j_{1,m,p}\rangle =\begin{cases}
% [C-\frac{1}{2}(A+B)]\frac{j(j+1)}{2}, & \text{ if } p= 0,\\
%-[C-\frac{1}{2}(A+B)]\frac{j(j+1)}{2}, & \text{ if } p= 1,
% \end{cases}
% \]
% while for $k\rangle 1$ one has to expand at higher order to see this fact \cite[Table 7.8]{gordy}.\\
% \end{remark}
 
  The spectral gaps in $\mathcal{M}_j$ that we consider are perturbations of symmetric top spectral gaps (cf. figures \ref{lambda(0)} and \ref{transitionsfigure}).
  %and the decoupled control operators are the ones that converge to the decoupled operators of a symmetric top, as the asymmetry parameter goes to zero. As we shall see, this procedure needs an average over the $k$-degeneracy, otherwise we would not get control operators that are continuous in $b$. 
We define:
  \begin{equation}\label{usami-perturbed}
 \begin{split}
&\lambda_{k,0}^j(\mu_{\rm o}):= \mid E_{k+1,0}^{j+1}(\mu_{\rm o})-E_{k,0}^j(\mu_{\rm o})\mid , \;\; k=0,\dots,j, \\ &
\lambda_{k,1}^j(\mu_{\rm o}):= \mid E_{k-1,1}^{j+1}(\mu_{\rm o})-E_{k,1}^j(\mu_{\rm o})\mid , \;\; k=2,\dots,j,\\ & \lambda_{1,1}^j(\mu_{\rm o}):= \mid E_{0,0}^{j+1}(\mu_{\rm o})-E_{1,1}^j(\mu_{\rm o})\mid 
\end{split}
\end{equation}
  \begin{equation}\label{usami-perturbed1}
 \begin{split}
&\rho_{k,0}^j(\mu_{\rm o}):= \mid E_{k-1,0}^{j+1}(\mu_{\rm o})-E_{k,0}^j(\mu_{\rm o})\mid , \;\; k=1,\dots,j, \\ & \rho_{0,0}^j(\mu_{\rm o}):= \mid E_{1,1}^{j+1}(\mu_{\rm o})-E_{0,0}^j(\mu_{\rm o})\mid \\&
\rho_{k,1}^j(\mu_{\rm o}):= \mid E_{k+1,1}^{j+1}(\mu_{\rm o})-E_{k,1}^j(\mu_{\rm o})\mid , \;\; k=1,\dots,j,
\end{split}
\end{equation}
then
 \begin{equation}\label{usami1-perturbed}
 \begin{split}
&\eta_{k,0}^j(\mu_{\rm o}):= \mid E_{k+1,0}^{j}(\mu_{\rm o})-E_{k,0}^j(\mu_{\rm o})\mid , \;\; k=0,\dots,j-1, \\ &
\eta_{k,1}^j(\mu_{\rm o}):= \mid E_{k-1,1}^{j}(\mu_{\rm o})-E_{k,1}^j(\mu_{\rm o})\mid , \;\; k=2,\dots,j,\\ & \eta_{1,1}^j(\mu_{\rm o}):= \mid E_{0,0}^{j}(\mu_{\rm o})-E_{1,1}^j(\mu_{\rm o})\mid 
\end{split}
 \end{equation}
   %\begin{equation}\label{usami1-perturbed1}
 %\begin{split}
%&\iota_{k,0}^j(b):= \mid E_{k-1,0}^{j}(b)-E_{k,0}^j(b)\mid , \;\; k=1,\dots,j, \;\; \iota_{0,0}^j(b):= \mid E_{1,1}^{j}(b)-E_{0,0}^j(b)\mid \\&
%\iota_{k,1}^j(b):= \mid E_{k+1,1}^{j}(b)-E_{k,1}^j(b)\mid , \;\; k=1,\dots,j-1,
%\end{split}
%\end{equation}
 and finally
 \begin{equation}\label{usami2-perturbed}
 \begin{split}
&\sigma^j_{k,p}(\mu_{\rm o}):= \mid E_{k,p}^{j+1}(\mu_{\rm o})-E_{k,p}^j(\mu_{\rm o})\mid  ,\;\; k=1,\dots,j,\;p=0,1,\\ &\sigma^j_{0,0}(\mu_{\rm o}):= \mid E_{0,0}^{j+1}(\mu_{\rm o})-E_{0,0}^j(\mu_{\rm o})\mid .
\end{split}
\end{equation}
\begin{figure}[ht!]\begin{center}
\includegraphics[width=0.5\linewidth, draft = false]{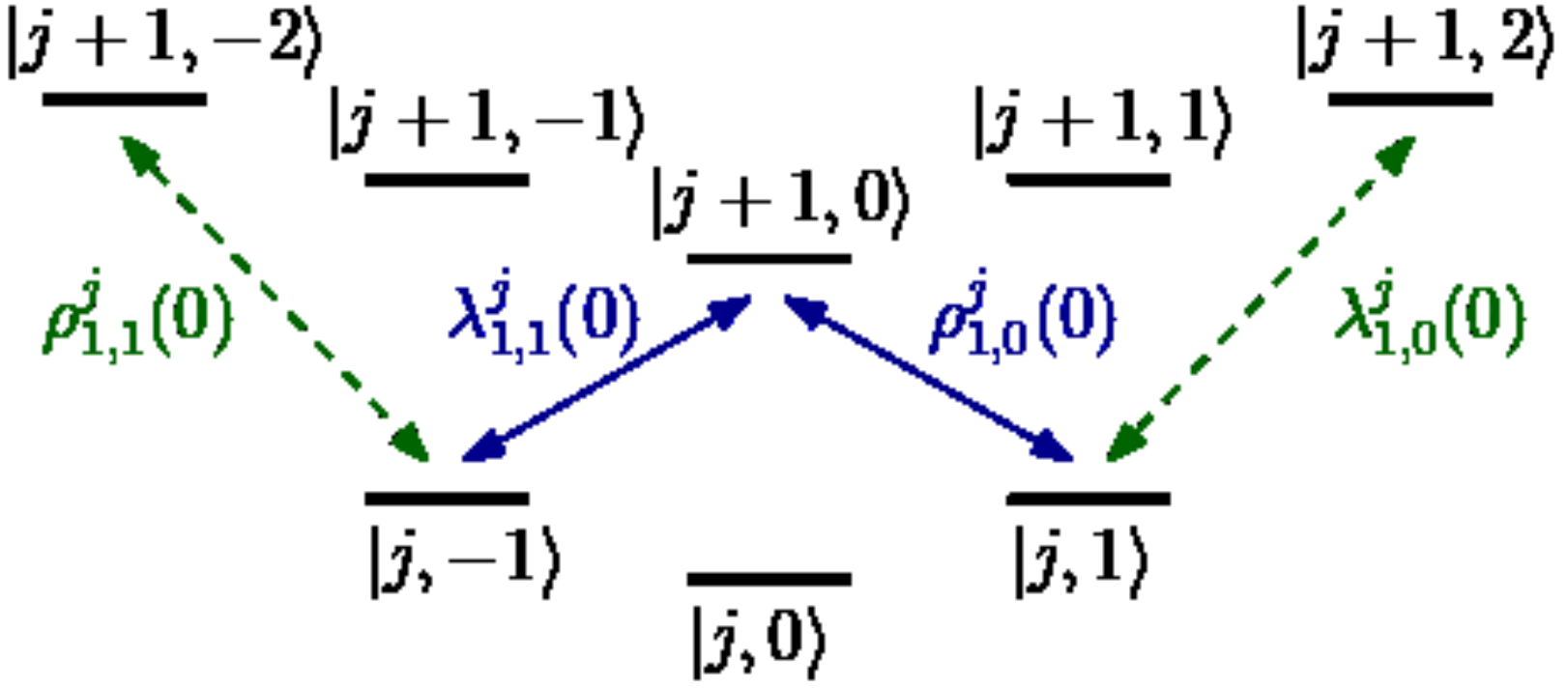}
\caption{Graph of the transitions associated with the unperturbed frequencies $\lambda_{k,p}^j(0)$ and $\rho_{k,p}^j(0)$ between unperturbed eigenstates $\mid  j,k\rangle =\mid  j,k,m\rangle :=D_{k,m}^j$ ($m$ fixed). Same-shaped arrows correspond to equal spectral gaps.} \label{lambda(0)}
\end{center}\end{figure}\;\;\;\;\;\;\;\;
\begin{figure}[ht!]
\subfigure[]{
\includegraphics[width=0.47\linewidth, draft = false]{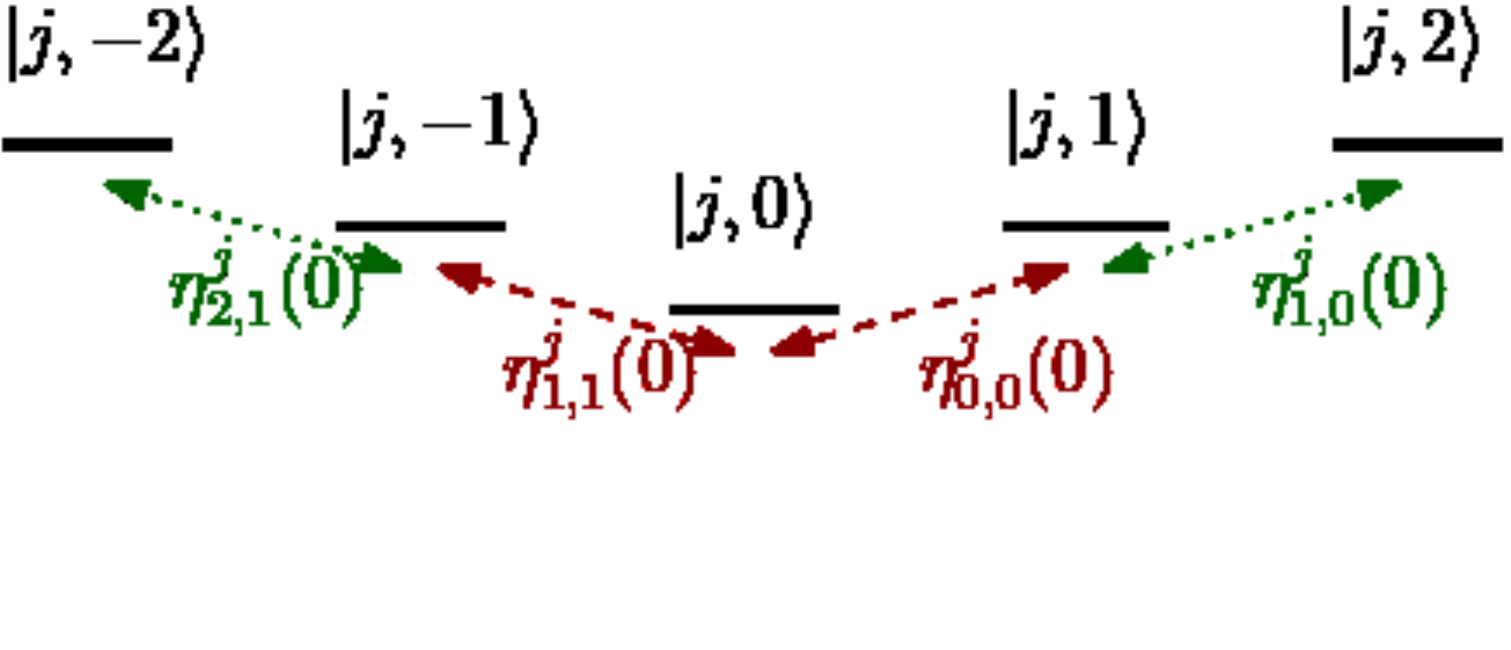} \label{eta(0)} }
\subfigure[]{
\includegraphics[width=0.37\linewidth, draft = false]{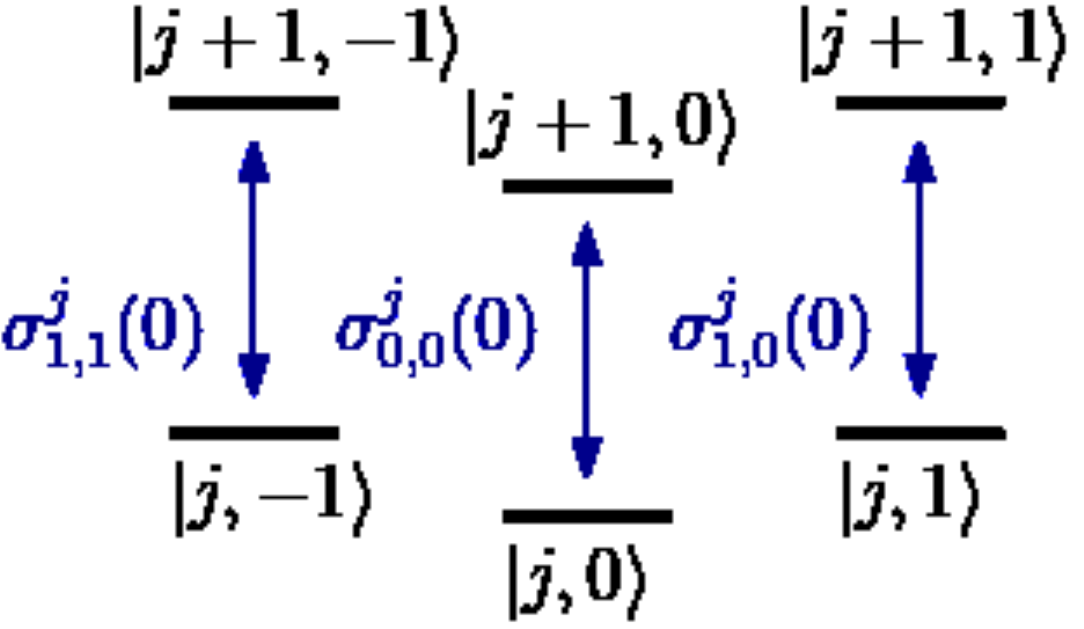} \label{sigma(0)} }
\caption{Transitions between unperturbed eigenstates  $\mid  j,k\rangle =\mid  j,k,m\rangle :=D_{k,m}^j$ ($m$ fixed):
\subref{eta(0)} at unperturbed frequency $\eta^j_{k,p}(0)$; \subref{sigma(0)} at unperturbed frequency $\sigma^j_{k,p}(0)$. Same-shaped arrows correspond to equal spectral gaps.} \label{transitionsfigure}\end{figure}

\begin{remark}%Notice that, in general, for each $\omega\in\{\lambda,\eta,\sigma\}$ it may happen (and it is indeed the case) that $\omega_{k,0}^j(b)\neq \omega^j_{k,1}(b)$ if $b\neq 0$ (see, e.g., \cite[Table 7.8]{gordy}) as the $k$-degeneracy is lifted, whereas $\omega_{k,0}^j(0)= \omega^j_{k,1}(0)$.
The spectral gaps listed above exhibit several symmetries at $\mu_{\rm o}= 0$, due to the fact that $E^j_{k,0}(0)=E^j_{k,1}(0)$: indeed, we have (compare also with figures \ref{lambda(0)} and \ref{transitionsfigure})
\begin{equation}\label{eq:symmetries}
\begin{split}
&\lambda^j_{0,0}(0)=\rho^j_{0,0}(0)\;,\;\;\lambda^j_{k,0}(0)=\rho^j_{k,1}(0)\;,\;\;\lambda^j_{k,1}(0)=\rho^j_{k,0}(0)\;,\; k=1,\dots,j,\\ &
\eta^j_{k,0}(0)=\eta^j_{k+1,1}(0)\;,k=0,\dots,j-1,\;\;
\sigma^j_{k,0}(0)=\sigma^j_{k,1}(0)\;,\;\;k=1,\dots,j.
\end{split}
\end{equation}
%It is well-known (see, e.g., \cite[Table 7.8]{gordy}) that these symmetries vanish as $b\neq 0$, due to the vanishing of the $k$-degeneracy. This fact may be an advantage for the controllability of an asymmetric top, w.r.t. the controllability of a symmetric top: indeed, it represents the fact that the symmetric top (i.e., the case $b=0$) has more resonances than the asymmetric top in its spectrum, and a system with more resonances is usually harder to control. We do not actually exploit this fact in the proof of controllability for the asymmetric top. Indeed, being the symmetric top already controllable despite the additional resonances, we "recreate" the additional resonances also in the asymmetric top (that is, in \eqref{decoupledoperators0} below we average over the symmetries listed in \eqref{eq:symmetries}), to obtain decoupled operators that are continuous (and analytic) as the asymmetry parameter $b$ tends to $0$.
\end{remark}
\begin{lemma}\label{ext-perturbed}
Let $(A+B)/C\notin \mathbb{Q}$. Then, for almost every $\mu_{\rm o}\in[-1,0]$, $(\omega_{ k,p}^j(\mu_{\rm o}),l)\in \Xi_j^0$, for all $\omega\in\{\lambda,\rho,\sigma\}$ for all $k=0,\dots,j$, if $p=0$, and for all $k=1,\dots,j$, if $p=1$, with $l=1,2,3$. Moreover, $(\eta_{k,p}^j(\mu_{\rm o}),l) \in \Xi_j^1$, for all $k=0,\dots,j-1$, if $p=0$ and for all $k=1,\dots,j$, if $p=1$, with $l=1,2,3$.
\end{lemma}
\begin{proof}
We need to prove that, for almost every $\mu_{\rm o}$, $(\omega^j_{k,p}(\mu_{\rm o}),l)$ belongs to $ \Xi_j^1$ for all $ \omega\in\{\lambda,\rho,\eta,\sigma\}$: thanks to the selection rules \eqref{rules}, we have to prove that the spectral gaps $\omega^j_{k,p}(\mu_{\rm o})$ do not couple transitions between two states that are, respectively, in $\mathcal{H}_{j-1}$ and $\mathcal{H}_{j}$ or in $\mathcal{H}_{j+1}$ and $\mathcal{H}_{j+2}$, for almost every $\mu_{\rm o}$. We are thus concerned with the condition
\begin{equation}\label{nonresonance}
\omega^j_{k,p}(\mu_{\rm o}) \neq \omega,\quad \forall \omega \in \Sigma_{j-1,j}\cup\Sigma_{j+1,j+2},
\end{equation}
where $\Sigma_{t,u}:=\{\mid E^t_{k,p}(\mu_{\rm o})-E^u_{k',p'}(\mu_{\rm o})\mid \,\mid  k,k'=0,\dots,j,\;p,p'=0,1\}$ is the set of spectral gaps of $H(\mu_{\rm o})$ between eigenstates in $\mathcal{H}_t$ and $\mathcal{H}_u$.
 Since every spectral gap $\omega$ of $H(\mu_{\rm o})$ can be seen as an analytic function $\omega(\mu_{\rm o})$ where $\omega(0)$ is a spectral gap of $H(0)$, and since the zeros of an analytic function (in this case, the function $\omega^j_{k,p}(\mu_{\rm o})-\omega(\mu_{\rm o})$) on $[-1,0]$ are finite, condition \eqref{nonresonance} is implied for all but a finite number of $\mu_{\rm o}\in[-1,0]$ if it is true at $\mu_{\rm o}=0$. At $\mu_{\rm o}=0$, \eqref{nonresonance} is true if $(A+B)/C \notin \mathbb{Q}$ after \cite[Lemma 3.9]{Ugo-Mario-Io-symmetrictop}. 
 
 The additional requirement  $(\omega^j_{k,p}(\mu_{\rm o}),l)\in \Xi_j^0\; \forall \omega\in\{\lambda,\rho,\sigma\}$, is proved in an analogous way: for each $m,n\in\mathbb{N}\setminus\{j,j+1\}$, we are concerned with the condition
 \begin{equation}\label{nonresonance1}
\omega^j_{k,p}(\mu_{\rm o}) \neq \omega,\quad \forall \omega \in \Sigma_{m,n}
\end{equation}
that holds true at $\mu_{\rm o}=0$, if $(A+B)/C \notin \mathbb{Q}$, after \cite[Lemma 3.9]{Ugo-Mario-Io-symmetrictop}. So, by analyticity, \eqref{nonresonance1} holds true for all but a finite number of $\mu_{\rm o}\in[-1,0]$. Hence, the set $\Lambda$ defined as the set of $\mu_{\rm o}$ such that condition \eqref{nonresonance1} holds true for all $m,n \in\mathbb{N}\setminus\{j,j+1\}$ is given by a countable intersection of sets of full measure $1$\footnote{Indeed, $\Lambda=\bigcap_{m,n\in\mathbb{N}\setminus\{j,j+1\}}\Lambda_{m,n}$ where $\Lambda_{m,n}$ is defined as the set of $\mu_{\rm o}\in[-1,0]$ such that \eqref{nonresonance1} holds true.}, which has measure $1$. This concludes the proof.
 \end{proof}
We first consider the family of decoupled control operators (averaged over the $p$-degeneracies that appear at $\mu_{\rm o}=0$)
\begin{equation}\label{decoupledoperators0}
\begin{split}
\mathcal{F}_j(\mu_{\rm o}):=\Big\{ \ii H^{(j)},&\frac{1}{2}\Big(\mathcal{E}_{\lambda_{0,0}^j(\mu_{\rm o})}(\ii H_l^{(j)})+\mathcal{E}_{\rho_{0,0}^j(\mu_{\rm o})}(\ii H_l^{(j)})\Big),\\ & \frac{1}{2}\Big(\mathcal{E}_{\lambda_{k,0}^j(\mu_{\rm o})}(\ii H_l^{(j)})+\mathcal{E}_{\rho_{k,1}^j(\mu_{\rm o})}(\ii H_l^{(j)})\Big), \\&
 \frac{1}{2}\Big(\mathcal{E}_{\lambda_{k,1}^j(\mu_{\rm o})}(\ii H_l^{(j)})+\mathcal{E}_{\rho_{k,0}^j(\mu_{\rm o})}(\ii H_l^{(j)})\Big)\; \mid \;  l=1,2,3, \; k=1,\dots,j \Big\}.
 \end{split}
\end{equation}
We denote by $\mathrm{L}_j(\mu_{\rm o}):=\mathrm{Lie}(\mathcal{F}_j(\mu_{\rm o}))$ and notice that, thanks to Lemma \ref{ext-perturbed}, $\mathrm{L}_j(\mu_{\rm o})\subset\mathrm{Lie}(\nu_j^0)$ (cf.~\eqref{eq:modes}) for almost every $\mu_{\rm o}$. Then we define the family of matrices
\begin{equation}\label{decoupledoperators1}
\begin{split}
&\mathcal{P}_j(\mu_{\rm o}):=\Big\{\ii H^{(j)}, \frac{1}{2}\Big(\mathcal{E}_{\eta_{k,0}^j(\mu_{\rm o})}(\ii H_l^{(j)})+\mathcal{E}_{\eta_{k+1,1}^j(\mu_{\rm o})}(\ii H_l^{(j)})\Big)\mid \;  l=1,2,3, \; k=0,\dots,j \Big\} \\ &
\cup\Big\{\mathcal{E}_{\sigma_{0,0}^j(\mu_{\rm o})}(\ii H_l^{(j)}), 
 \frac{1}{2}\Big(\mathcal{E}_{\sigma_{k,0}^j(\mu_{\rm o})}(\ii H_l^{(j)})+\mathcal{E}_{\sigma_{k,1}^j(\mu_{\rm o})}(\ii H_l^{(j)})\Big)\;\mid \;  l=1,2,3, \; k=1,\dots,j \Big\},
 \end{split}
\end{equation}
and notice that, by Lemma~\ref{ext-perturbed},  
$\mathcal{P}_j(\mu_{\rm o}) \subset \mathrm{Lie}( \nu_j^1)$ (cf.~\eqref{eq:modes}), for almost every $\mu_{\rm o}$.
Therefore, 
\[
\widetilde{\mathcal{P}}_j(\mu_{\rm o}):=\{A, [B,C] \mid  A,B \in \mathrm{L}_j(\mu_{\rm o}), C \in \mathcal{P}_j(\mu_{\rm o})\} \subset \mathcal{T}_j, \text{ for almost every } \mu_{\rm o}\in[-1,0],
\]
where 
we recall that 
$\mathcal{T}_j$ is the minimal ideal of $ \mathrm{Lie}(\nu_j^1)$ containing $\nu_j^0$.

 %Consider the decoupled control operators, averaged on $p=0,1$
%\begin{equation}\label{decoupledoperators}
%\frac{1}{2}\Big(\mathcal{E}_{\omega_{k,0}^j(b)}(H_l)+\mathcal{E}_{\omega_{k,1}^j(b)}(H_l)\Big), \quad k=0,\dots,j,\quad l=1,2,3,
%\end{equation}
%that converge to $\mathcal{E}_{\omega_{k}^j}(H_l)$ as $b$ goes to $0$, where $\omega_k^j=\omega_{k,0}^j(0)=\omega_{k,1}^j(0)$, where $\omega\in\{\lambda,\eta,\sigma\}$.
%Notice that if we did not average over $p=0$ and $p=1$ and just considered the operators $\mathcal{E}_{\omega_{k,0}^j(b)}(H_l)$ or $\mathcal{E}_{\omega_{k,1}^j(b)}(H_l)$, they would not converge to $\mathcal{E}_{\omega_{k}^j}(H_l)$ as $b$ goes to $0$, because the $k$-degeneracy appears only at $b=0$ and not in a neighborhood of $b=0$.\\
The next proposition concludes the proof of Theorem \ref{thm:asymmtopcontrol} when $\delta\neq (\delta_a,\delta_b,0)^T$:
\begin{proposition}
For almost every $\mu_{\rm o}\in[-1,0]$, $\mathfrak{su}(\mathcal{M}_j)\subset\mathrm{Lie}(\widetilde{\mathcal{P}}_j(\mu_{\rm o}))$.
\end{proposition}
\begin{proof}
We first claim that, for all but a finite number of $\mu_{\rm o}\in[-1,0]$, one can write
\begin{equation}\label{eq:analyticexpression}
\begin{split}
&\mathcal{E}_{\lambda_{k,0}^j(\mu_{\rm o})}(\ii H_3^{(j)})+\mathcal{E}_{\rho_{k,1}^j(\mu_{\rm o})}(\ii H_3^{(j)})\\ &
=\sum_{\substack{m=-j,\dots,j\\p=0,1}} \langle  S_{k,m,p}^j(\mu_{\rm o}),\ii H_3 S_{k+1,m,p}^{j+1}(\mu_{\rm o})\rangle \;\mid S_{k,m,p}^j(\mu_{\rm o})\rangle \langle  S_{k+1,m,p}^{j+1}(\mu_{\rm o})\mid \\ &
 +\langle  S_{k+1,m,p}^{j+1}(\mu_{\rm o}),\ii H_3 S_{k,m,p}^{j}(\mu_{\rm o})\rangle \;\mid S_{k+1,m,p}^{j+1}(\mu_{\rm o})\rangle \langle  S_{k,m,p}^{j}(\mu_{\rm o})\mid ,
 \end{split}
\end{equation}
where the operator $\mid \psi\rangle \langle \phi\mid $ is the rank-one projector defined by $\mid \psi\rangle \langle \phi\mid \varphi:=\langle \varphi,\psi\rangle \phi$, for all $\psi,\phi,\varphi\in \mathcal{M}_j$. Indeed, it is clear that \eqref{eq:analyticexpression} holds when there are no internal resonances, that is, for all $\mu_{\rm o}\in[-1,0]$ such that
\begin{equation}\label{eq:intperturbed}
 \lambda^j_{k,0}(\mu_{\rm o}), \rho^j_{k,1}(\mu_{\rm o})\neq \omega, \quad \forall \omega\in\Sigma_{j,j+1}\setminus\{\lambda^j_{k,0}(\mu_{\rm o}), \rho^j_{k,1}(\mu_{\rm o})\}.
 \end{equation}
Since the elements of $\Sigma_{j,j+1}\setminus\{\lambda^j_{k,0}(\mu_{\rm o}), \rho^j_{k,1}(\mu_{\rm o})\}$ are finite, and every gap $\omega$ can be written as the analytic perturbation $\omega(\mu_{\rm o})$ of a gap at $\mu_{\rm o}=0$, and since \eqref{eq:intperturbed} holds if $(A+B)/C \notin \mathbb{Q}$ at $\mu_{\rm o}=0$ (cf. \cite[Lemma 3.10]{Ugo-Mario-Io-symmetrictop}), by analyticity \eqref{eq:intperturbed} holds for all but a finite number of $\mu_{\rm o}\in[-1,0]$, and the claim is proved. In particular, when $(A+B)/C \notin \mathbb{Q}$, \eqref{eq:analyticexpression} holds at $\mu_{\rm o}=0$ and gives an analytic expression (that is, the RHS) for $\mathcal{E}_{\lambda_{k,0}^j(\mu_{\rm o})}(\ii H_3^{(j)})+\mathcal{E}_{\rho_{k,1}^j(\mu_{\rm o})}(\ii H_3^{(j)})$, for a.e. $\mu_{\rm o}$. Analogous formulas to \eqref{eq:analyticexpression} hold for the other matrices of \eqref{decoupledoperators0} and \eqref{decoupledoperators1} for a.e. $\mu_{\rm o}$. %Noticing that 
%$$\lambda_{k,0}^j(\mu_{\rm o})\,W_\ii\mathcal{E}_{\lambda_{k,0}^j(\mu_{\rm o})}(\ii H_3)=\left[\ii H(\mu_{\rm o})\,, \mathcal{E}_{\lambda_{k,0}^j(\mu_{\rm o})}(\ii H_3)\right],$$
%we see that also the matrices corresponding to $\xi=\ii$ in \eqref{decoupledoperators0} and \eqref{decoupledoperators1} coincide with analytic curves of matrices for all but a finite number of $\mu_{\rm o}$.
 To conclude, since the statement holds at $\mu_{\rm o}=0$ when $(A+B)/C \notin \mathbb{Q}$ and $\delta\neq (0,0,\delta_c)^T,(\delta_a,\delta_b,0)^T$ (cf. \cite[Proposition 3.12]{Ugo-Mario-Io-symmetrictop}), by analyticity it holds for all but a finite number of $\mu_{\rm o}\in[-1,0]$.
\end{proof}
We now turn to the case $\delta=(\delta_a,\delta_b,0)^T$, with $\delta_a,\delta_b\neq 0$. We then express the asymmetric Hamiltonian as a perturbation of a prolate symmetric Hamiltonian, using a single asymmetry parameter
\begin{equation}\label{H(b)bis}
\begin{split}
H(A,B,C)&=AP_a^2+BP_b^2+CP_c^2\\ &=\frac{1}{2}(C+B)P^2+[A-\frac{1}{2}(C+B)]P_a^2\\ &+\Big(\frac{C-B}{2A-B-C}\Big)[A-\frac{1}{2}(C+B)](P_c^2-P_b^2)\\ &
=H\Big(A,\frac{1}{2}(C+B),\frac{1}{2}(C+B)\Big)+\mu_{\rm p}V_{\rm p}=:H(\mu_{\rm p}),
\end{split}
\end{equation}
where $H\Big(A,\frac{1}{2}(C+B),\frac{1}{2}(C+B)\Big)$ is the rotational Hamiltonian of a prolate symmetric top with rotational constants $A$ and $\frac{1}{2}(C+B)$, symmetry axis $a$, dipole $\delta=(\delta_a,\delta_b,0)$ with $\delta_a,\delta_b\neq 0$ (that is, dipole not parallel nor orthogonal to the symmetry axis $a$) and we have defined the Wang prolate asymmetry parameter \cite[Chapter 7]{gordy}
\begin{equation}\label{eq:wangbis}
\mu_{\rm p}:= \frac{C-B}{2A-B-C}\in[-1,0],
\end{equation}
and a perturbation operator
$$V_{\rm p}:=[A-\frac{1}{2}(C+B)](P_1^2-P_2^2). $$
Since $\delta$ is not parallel nor orthogonal to the axis $a$ (which is the symmetry axis of the associated prolate symmetric top), we can then apply the same proof and conclude that system \eqref{schro} is approximately controllable when $\delta=(\delta_a,\delta_b,0)^T$.
\end{proof}

%%===========================================================================================%%
%% If you are submitting to one of the Nature Portfolio journals, using the eJP submission   %%
%% system, please include the references within the manuscript file itself. You may do this  %%
%% by copying the reference list from your .bbl file, paste it into the main manuscript .tex %%
%% file, and delete the associated \verb+\bibliography+ commands.                            %%
%%===========================================================================================%%

\textbf{Acknowledgments}\\
The author thanks Ugo Boscain and Mario Sigalotti for having proposed this problem, and enlightening discussions on the subject.

 The project leading to this publication has received funding from the European Union’s Horizon 2020 research and innovation programme under the Marie Sklodowska-Curie grant agreement no. 765267 (QuSCo).

\bibliographystyle{siamplain}
\bibliography{references}

\begin{thebibliography}{10}

\bibitem{Ratiu}
{\sc R.~Abraham, J.~E. Marsden, and T.~Ratiu}, {\em Manifolds, tensor analysis,
  and applications}, vol.~75 of Applied Mathematical Sciences, Springer-Verlag,
  New York, second~ed., 1988.

\bibitem{AS}
{\sc A.~Agrachev and Y.~Sachkov}, {\em Control theory from a geometric point of
  view}, Springer, 2004.

\bibitem{victor}
{\sc V.~V. Albert, J.~P. Covey, and J.~Preskill}, {\em Robust encoding of a
  qubit in a molecule}, Phys. Rev. X, 10 (2020), p.~031050.

\bibitem{Altafini}
{\sc C.~Altafini and F.~Ticozzi}, {\em Modeling and control of quantum systems:
  an introduction}, IEEE Trans. Automat. Control, 57 (2012), pp.~1898--1917.

\bibitem{ball}
{\sc J.~M. Ball, J.~E. Marsden, and M.~Slemrod}, {\em Controllability for
  distributed bilinear systems}, SIAM J. Control Optim., 20 (1982),
  pp.~575--597.

\bibitem{Coron}
{\sc K.~Beauchard and J.-M. Coron}, {\em Controllability of a quantum particle
  in a moving potential well}, J. Funct. Anal., 232 (2006), pp.~328--389.

\bibitem{laurent}
{\sc K.~Beauchard and C.~Laurent}, {\em Local controllability of 1{D} linear
  and nonlinear {S}chr\"{o}dinger equations with bilinear control}, J. Math.
  Pures Appl. (9), 94 (2010), pp.~520--554.

\bibitem{BCCS}
{\sc U.~Boscain, M.~Caponigro, T.~Chambrion, and M.~Sigalotti}, {\em A weak
  spectral condition for the controllability of the bilinear {S}chr\"{o}dinger
  equation with application to the control of a rotating planar molecule},
  Comm. Math. Phys., 311 (2012), pp.~423--455.

\bibitem{BCS}
{\sc U.~Boscain, M.~Caponigro, and M.~Sigalotti}, {\em Multi-input
  {S}chr\"{o}dinger equation: controllability, tracking, and application to the
  quantum angular momentum}, J. Differential Equations, 256 (2014),
  pp.~3524--3551.

\bibitem{BGRS}
{\sc U.~Boscain, J.-P. Gauthier, F.~Rossi, and M.~Sigalotti}, {\em Approximate
  controllability, exact controllability, and conical eigenvalue intersections
  for quantum mechanical systems}, Comm. Math. Phys., 333 (2015),
  pp.~1225--1239.

\bibitem{panati}
{\sc U.~Boscain, P.~Mason, G.~Panati, and M.~Sigalotti}, {\em On the control of
  spin-boson systems}, J. Math. Phys., 56 (2015), pp.~092101, 15.

\bibitem{Ugo-Mario-Io-symmetrictop}
{\sc U.~Boscain, E.~Pozzoli, and M.~Sigalotti}, {\em Classical and quantum
  controllability of a rotating symmetric molecule}, SIAM J. Control Optim., 59
  (2021), pp.~156--184.

\bibitem{symmtop_ifac}
{\sc U.~Boscain, E.~Pozzoli, and M.~Sigalotti}, {\em Reachable sets for a 3{D}
  accidentally symmetric molecule}, Proceedings of the 21st IFAC World
  Congress, Berlin, Germany,  (hal-02981135 (2020)).

\bibitem{weaklycoupled}
{\sc N.~Boussa\"{\i}d, M.~Caponigro, and T.~Chambrion}, {\em Weakly coupled
  systems in quantum control}, IEEE Trans. Automat. Control, 58 (2013),
  pp.~2205--2216.

\bibitem{Chambrion-Caponigro-Boussaid-2020}
{\sc N.~Boussa\"{\i}d, M.~Caponigro, and T.~Chambrion}, {\em Regular
  propagators of bilinear quantum systems}, J. Funct. Anal., 278 (2020),
  pp.~108412, 66.

\bibitem{CS}
{\sc M.~Caponigro and M.~Sigalotti}, {\em Exact controllability in projections
  of the bilinear {S}chr\"{o}dinger equation}, SIAM J. Control Optim., 56
  (2018), pp.~2901--2920.

\bibitem{chambrion}
{\sc T.~Chambrion}, {\em Periodic excitations of bilinear quantum systems},
  Automatica J. IFAC, 48 (2012), pp.~2040--2046.

\bibitem{BCMS}
{\sc T.~Chambrion, P.~Mason, M.~Sigalotti, and U.~Boscain}, {\em
  Controllability of the discrete-spectrum {S}chr\"{o}dinger equation driven by
  an external field}, Ann. Inst. H. Poincar\'{e} Anal. Non Lin\'{e}aire, 26
  (2009), pp.~329--349.

\bibitem{dakin}
{\sc T.~W. Dakin, W.~E. Good, and D.~K. Coles}, {\em Resolution of a rotational
  line of the {OCS} molecule and its {S}tark effect}, Phys. Rev., 70 (1946),
  pp.~560--560.

\bibitem{Glaser2015}
{\sc S.~J. Glaser, U.~Boscain, T.~Calarco, C.~P. Koch, W.~K{\"o}ckenberger,
  R.~Kosloff, I.~Kuprov, B.~Luy, S.~Schirmer, T.~Schulte-Herbr{\"u}ggen,
  D.~Sugny, and F.~K. Wilhelm}, {\em Training {S}chr{\"o}dinger's cat: quantum
  optimal control}, The European Physical Journal D, 69 (2015), p.~279.

\bibitem{gordy}
{\sc W.~Gordy and R.~Cook}, {\em Microwave molecular spectra}, Techniques of
  chemistry, Wiley, 1984.

\bibitem{rabitz}
{\sc R.~Judson, K.~Lehmann, H.~Rabitz, and W.~Warren}, {\em Optimal design of
  external fields for controlling molecular motion: application to rotation},
  Journal of Molecular Structure, 223 (1990), pp.~425 -- 456.

\bibitem{jurdje}
{\sc V.~Jurdjevic}, {\em Geometric control theory}, vol.~52 of Cambridge
  Studies in Advanced Mathematics, Cambridge University Press, Cambridge, 1997.

\bibitem{keyl}
{\sc M.~Keyl, T.~Schulte-Herbr\"uggen, and R.~Zeier}, {\em Controlling several
  atoms in a cavity}, New J. of Physics, 16 (2014).

\bibitem{koch}
{\sc C.~P. Koch, M.~Lemeshko, and D.~Sugny}, {\em Quantum control of molecular
  rotation}, Rev. Mod. Phys., 91 (2019), p.~035005.

\bibitem{Leibscher19}
{\sc M.~Leibscher, T.~F. Giesen, and C.~P. Koch}, {\em Principles of
  enantio-selective excitation in three-wave mixing spectroscopy of chiral
  molecules}, J. Chem. Phys., 151 (2019), p.~014302.

\bibitem{monika_eugenio}
{\sc M.~Leibscher, E.~Pozzoli, C.~Pérez, M.~Schnell, M.~Sigalotti, U.~Boscain,
  and C.~Koch}, {\em Complete controllability despite degeneracy: Quantum
  control of enantiomer-specific state transfer in chiral molecules}, arXiv:
  2010.09296 (2020).

\bibitem{mehats}
{\sc F.~M{\'e}hats, Y.~Privat, and M.~Sigalotti}, {\em On the controllability
  of quantum transport in an electronic nanostructure}, SIAM J. Appl. Math., 74
  (2014), pp.~1870--1894.

\bibitem{nersesyan}
{\sc V.~Nersesyan}, {\em Global approximate controllability for
  {S}chr\"{o}dinger equation in higher {S}obolev norms and applications}, Ann.
  Inst. H. Poincar\'{e} Anal. Non Lin\'{e}aire, 27 (2010), pp.~901--915.

\bibitem{PattersonNature13}
{\sc D.~Patterson, M.~Schnell, and J.~M. Doyle}, {\em Enantiomer-specific
  detection of chiral molecules via microwave spectroscopy}, Nature, 497
  (2013), pp.~475--477.

\bibitem{rouchon}
{\sc P.~Rouchon}, {\em Quantum systems and control}, ARIMA Rev. Afr. Rech.
  Inform. Math. Appl., 9 (2008), pp.~325--357.

\bibitem{sakurai}
{\sc J.~J. Sakurai}, {\em {Modern quantum mechanics; rev. ed.}},
  Addison-Wesley, Reading, MA, 1994.

\bibitem{turinici2}
{\sc G.~Turinici}, {\em On the controllability of bilinear quantum systems}, in
  Mathematical models and methods for ab initio quantum chemistry, vol.~74 of
  Lecture Notes in Chem., Springer, Berlin, 2000, pp.~75--92.

\bibitem{Varshalovich}
{\sc D.~A. Varshalovich, A.~N. Moskalev, and V.~K. Khersonski\u{\i}}, {\em
  Quantum theory of angular momentum}, World Scientific Publishing Co., Inc.,
  Teaneck, NJ, 1988.

\bibitem{yu}
{\sc P.~Yu, L.~W. Cheuk, I.~Kozyryev, and J.~M. Doyle}, {\em A scalable quantum
  computing platform using symmetric-top molecules}, New Journal of Physics, 21
  (2019), p.~093049.

\end{thebibliography}

\end{document}